\newtheorem{lemma}{Lemma}
\newtheorem{theorem}{Theorem}
\newtheorem{corollary}{Corollary}
\newproof{proof}{Proof}
\newproof{aproof}{Proof of Lemma \ref{Lemm_Relationship}}
\journal{Finite Fields and Their Applications}
\begin{document}

\begin{frontmatter}

%% Title, authors and addresses

%% use the tnoteref command within \title for footnotes;
%% use the tnotetext command for the associated footnote;
%% use the fnref command within \author or \address for footnotes;
%% use the fntext command for the associated footnote;
%% use the corref command within \author for corresponding author footnotes;
%% use the cortext command for the associated footnote;
%% use the ead command for the email address,
%% and the form \ead[url] for the home page:
%%
%% \title{Title\tnoteref{label1}}
%% \tnotetext[label1]{}
%% \author{Name\corref{cor1}\fnref{label2}}
%% \ead{email address}
%% \ead[url]{home page}
%% \fntext[label2]{}
%% \cortext[cor1]{}
%% \address{Address\fnref{label3}}
%% \fntext[label3]{}

\title{Differential Spectrum of Some Power Functions With Low Differential Uniformity}

%% use optional labels to link authors explicitly to addresses:
%% \author[label1,label2]{<author name>}
%% \address[label1]{<address>}
%% \address[label2]{<address>}

\author[SNU]{Sung-Tai Choi}
\author[SNU]{Seokbeom Hong}
\author[SNU]{Jong-Seon No}
\author[HU]{Habong Chung}
 %\sep Seokbeom Hong \sep \\Jong-Seon No \sep Habong Chung}

\address[SNU]{ Department of Electrical Engineering and
Computer Science, INMC\\ Seoul National University, Seoul 151-744, Korea}

\address[HU]{School of Electronics and Electrical Engineering\\ 
           Hongik University, Seoul 121-791, Korea}

\begin{abstract}
In this paper, for an odd prime $p$, the differential spectrum of the power function $x^{\frac{p^k+1}{2}}$ in $\mathbb{F}_{p^n}$ is calculated. For an odd prime $p$ such that $p\equiv 3\bmod 4$ and odd $n$ with $k|n$, the differential spectrum of the power function $x^{\frac{p^n+1}{p^k+1}+\frac{p^n-1}{2}}$ in $\mathbb{F}_{p^n}$ is also derived. From their differential spectrums, the differential uniformities of these two power functions are determined. We also find some new power functions having low differential uniformity.

\end{abstract}

\begin{keyword}
{
Almost perfect nonlinear \sep Differential cryptanalysis \sep Differential uniformity \sep Differential spectrum \sep Perfect nonlinear \sep Power function
}
\end{keyword}
%% keywords here, in the form: keyword \sep keyword

%% MSC codes here, in the form: \MSC code \sep code
%% or \MSC[2008] code \sep code (2000 is the default)

\end{frontmatter}

%%
%% Start line numbering here if you want
%%
% \linenumbers

%% main text

%% The Appendices part is started with the command \appendix;
%% appendix sections are then done as normal sections
%% \appendix

%% \section{}
%% \label{}

%% References
%%
%% Following citation commands can be used in the body text:
%% Usage of \cite is as follows:
%%   \cite{key}         ==>>  [#]
%%   \cite[chap. 2]{key} ==>> [#, chap. 2]
%%

\section{Introduction}

Let $p$ be a prime number and $\mathbb{F}_{p^n}$ the finite field with $p^n$ elements. Let $f(x)$ be  a mapping from $\mathbb{F}_{p^n}$ to $\mathbb{F}_{p^n}$. Let $N(a,b)$ denote the number of solutions $x\in \mathbb{F}_{p^n}$ of $f(x+a)-f(x)=b$, where $a\in \mathbb{F}_{p^n}^*$ and $b\in \mathbb{F}_{p^n}$. Then the differential uniformity $\Delta_f$ is defined as
\begin{align*}
	\Delta_f=\max_{a\in \mathbb{F}_{p^n}^*,b\in \mathbb{F}_{p^n}} N(a,b).
\end{align*} 
Nyberg \cite{Nyberg} defined a mapping to be differential $k$-uniform if $\Delta_f=k$. This differential uniformity is of interest in cryptography because differential and linear cryptanalysis exploit the weakness in the uniformity of the substitution functions which are used in data encryption standard (DES), advanced encryption standard (AES), and many other block cipher systems. For applications in cryptography, one would prefer functions having $\Delta_f$ as small as possible. Hence the functions with low $\Delta_f$ have been searched extensively \cite{Helleseth}--\cite{Edel}. 
 Especially for an odd prime $p$, there exist functions with $\Delta_f=1$, which are said to be perfect nonlinear (PN). The functions with $\Delta_f=2$ are said to be almost perfect nonlinear (APN). Some more functions having low differential uniformity are studied in \cite{Helleseth2} and  \cite{Bracken2}.

 Let $f(x)$ be the power function given as $f(x)=x^d$. For any $a\in \mathbb{F}_{p^n}^*$ and $b\in \mathbb{F}_{p^n}$, the differential equation $f(x+a)-f(x)=b$ can be rewritten as 
 \begin{align*}
 	a^d\Big(\big(\frac{x}{a}+1\big)^d-\big(\frac{x}{a}\big)^d\Big)=b,
 \end{align*}
 which means that
 \begin{align*}
 	N(a,b)=N(1,\frac{b}{a^d}).
 \end{align*}
 Hence, in dealing with power functions, we can only consider $N(1,b)$ instead of $N(a,b)$.

 The differential spectrum of the function $f(x)$ with $\Delta_f=k$ is defined as $(\omega_0,\omega_1,\dots,\omega_k)$, where $\omega_i$ denotes the number of $b\in \mathbb{F}_{p^n}$ such that $N(1,b)=i$.  In \cite{Blondeau2}, the differential spectrum of substitution functions is introduced and its relation to differential attacks on block ciphers is discussed.  In \cite{Blondeau}, the relationship between the differential spectrum of $x^{2^t-1}$ and $x^{2^{n-t+1}-1}$ in $\mathbb{F}_{2^n}$ is derived and the differential spectrum of $x^{2^t-1}$ for $t\in \{3,\lfloor n/2 \rfloor,\lceil n/2 \rceil+1,n-2\}$ is also calculated. Still, there have been not so many researches on the differential spectrum of certain functions. 
 
 In \cite{Helleseth}, for an odd prime $p$, the power function $x^{\frac{p^k+1}{2}}$ in $\mathbb{F}_{p^n}$ was first analyzed with respect to differential uniformity. It was shown that its differential uniformity is upper bounded as $\Delta_f\leq \gcd((p^k-1)/2,p^{2n}-1)$. Nevertheless, the upper bound is not tight in some cases of $p$, $n$, and $k$, which motivates us to derive the exact value of $\Delta_f$ for $x^{(p^k+1)/2}$ in this paper.

In this paper, for an odd prime $p$, the differential spectrum of $x^{\frac{p^k+1}{2}}$ in $\mathbb{F}_{p^n}$ is derived. For an odd prime $p$ such that $p\equiv 3\bmod 4$, odd $n$, and $k|n$, the differential spectrum of  $x^{\frac{p^n+1}{p^k+1}+\frac{p^n-1}{2}}$ in $\mathbb{F}_{p^n}$ is also derived. Based on the results, some new functions with low differential uniformity $\Delta_f$ are found. 
 
 This paper is organized as follows. In Section \ref{Pre}, some preliminaries and notations are stated. In Section \ref{Dif_Fun1}, the differential spectrum of  $x^{\frac{p^k+1}{2}}$ in $\mathbb{F}_{p^n}$ is proved. In Section \ref{Dif_Func2},  the differential spectrum of  $x^{\frac{p^n+1}{p^k+1}+\frac{p^n-1}{2}}$ in $\mathbb{F}_{p^n}$ is calculated. The conclusion is given in Section \ref{Concl}.

\section{Preliminaries and Notations}
\label{Pre}

Let $p$ be an odd prime, $\alpha$ be a primitive element of the finite field $\mathbb{F}_{p^n}$, and $p^n=sl+1$.  
 Then the cyclotomic classes $C_i,~0\leq i \leq s-1$, in $\mathbb{F}_{p^n}$ are defined as
 \begin{align*}
 	C_i=\{\alpha^{st+i}|~t=0,1,\dots,l-1\},~0\leq i\leq s-1.
 \end{align*}
 Note that $C_i$'s are pairwise disjoint and their union is the multiplicative group of $\mathbb{F}_{p^n}$ denoted by $\mathbb{F}_{p^n}^*=\mathbb{F}_{p^n}\setminus \{0\}$. Then the cyclotomic number $(i,j)_s$ is defined as the number of solutions $(x_i,x_j
 )\in C_i\times C_j$ for $x_i+1=x_j$.

 \begin{lemma}[Lemma 6 \cite{Storer}]\label{Lem_Cyclotomy_l_2}
 	When $s=2$, the cyclotomic numbers $(i,j)_2\equiv (i,j)$ are given as:\\
 	\noindent 1) $p^n\equiv 1\bmod 4$;
 	 \begin{align*}
 	 	(0,0)=\frac{p^n-5}{4};~(0,1)=(1,0)=(1,1)=\frac{p^n-1}{4}.
 	 \end{align*}
 	 \noindent 2) $p^n\equiv 3\bmod 4$;
 	\begin{align*}
 			(0,0)=(1,0)=(1,1)=\frac{p^n-3}{4};~(0,1)=\frac{p^n+1}{4}.
 	\end{align*} \hfill$\Box$
 \end{lemma}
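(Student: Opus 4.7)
The plan is to pin down the four cyclotomic numbers by combining two involution-based symmetries with the row and column sums of the $2\times 2$ array $((i,j))_{i,j \in \{0,1\}}$.

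The first symmetry comes from the substitution $x \mapsto 1/x$: if $x \in C_i$ and $x+1 \in C_j$, then $1/x \in C_{-i}$ and $1/x + 1 = (x+1)/x \in C_{j-i}$, giving the general identity $(i,j) = (-i, j-i)$. For $s=2$ one has $-i \equiv i \pmod 2$, so this reduces to $(1,0) = (1,1)$ and is vacuous in the $i=0$ row. The second symmetry comes from $x \mapsto -1-x$, which sends the pair $(x, x+1)$ to $(-(x+1), -x)$; its effect on cyclotomic indices depends on which class contains $-1$:
\begin{align*}
p^n \equiv 1 \bmod 4 \text{ (so } -1 \in C_0\text{):} &\quad (i,j) = (j, i),\\
p^n \equiv 3 \bmod 4 \text{ (so } -1 \in C_1\text{):} &\quad (i,j) = (j+1, i+1).
\end{align*}

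Combining the two symmetries, case 1 collapses $(0,1)$, $(1,0)$, $(1,1)$ into a common value $x$ and leaves $(0,0) = y$ separate, while case 2 collapses $(0,0)$, $(1,0)$, $(1,1)$ into a common value $y$ and leaves $(0,1) = x$ separate. I would then use the elementary counts
\begin{align*}
\sum_{j}(i,j) = |C_i| - [-1 \in C_i], \qquad \sum_{i,j}(i,j) = p^n - 2,
\end{align*}
where $|C_i| = (p^n-1)/2$ and the Iverson bracket equals $1$ exactly when $-1 \in C_i$. These give two independent linear equations in $x$ and $y$; routine elimination yields $(p^n-5)/4$ and $(p^n-1)/4$ in case 1, and $(p^n-3)/4$ and $(p^n+1)/4$ in case 2.

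The only step demanding real care is the derivation of the auxiliary symmetry in the $p^n \equiv 3 \bmod 4$ case, since every factor of $-1$ there shifts the cyclotomic index by one, and the two shifts must be tracked correctly to arrive at $(i,j) = (j+1, i+1)$ rather than $(j,i)$. Everything else — the $x \mapsto 1/x$ symmetry, the row/column sums, and the final two-by-two solve — is straightforward bookkeeping.
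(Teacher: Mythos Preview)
Your argument is correct. The two involutions $x\mapsto 1/x$ and $x\mapsto -1-x$ give exactly the relations you state, and the row sums $\sum_j(i,j)=|C_i|-[-1\in C_i]$ together with the identification of which class contains $-1$ pin down the two unknowns in each case; the arithmetic you outline produces the stated values.

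Note, however, that the paper does not supply its own proof of this lemma: it is quoted verbatim as Lemma~6 of Storer's \emph{Cyclotomy and Difference Sets} and marked with a $\Box$. So there is no ``paper's proof'' to compare against here. Your approach is in fact the classical one (and essentially what Storer does): derive the symmetry relations $(i,j)=(-i,j-i)$ and $(i,j)=(j+\epsilon,i+\epsilon)$ with $\epsilon$ the class of $-1$, then solve the resulting linear system using the obvious row sums. The only cosmetic remark is that once you have the two row sums you do not need the global sum $\sum_{i,j}(i,j)=p^n-2$ separately, since it is their total; but including it does no harm.
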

 
 For $s=2$, let $E_{ij}$, $0\leq i,j\leq 1$, be the set defined as
 \begin{align}\label{Def_Cyc}
 	E_{ij}=\{x\in \mathbb{F}_{p^n}^*|x\in C_i~{\rm and~}x+1\in C_j\}.
 \end{align}
Then $(i,j)=|E_{ij}|$.

In the following lemma, we are going to express each $x\in E_{ij}$ in terms of the primitive element of $\mathbb{F}_{p^n}$ or $\mathbb{F}_{p^{2n}}$. Let $[a,b]$ denote the set of consecutive integers between $a$ and $b$ including $a$ and $b$, that is, $[a,b]=\{a,a+1,\dots,b\}$. 

 \begin{lemma}\label{Lem_Cyc}
Any element $x$ in $E_{00}$ can be represented as 
 \begin{align}\label{Eq_C00}
 	x=\Big(\frac{\alpha^t-\alpha^{-t}}{2}\Big)^2
 \end{align}
 where $t$ varies over $\mathcal{T}_1=[1,(p^n-3)/4]$ for $p^n\equiv 3\bmod 4$ and  over $\mathcal{T}_2=[1,(p^n-5)/4]$ for $p^n\equiv 1\bmod 4$. Any element $x$ in $E_{11}$ can be represented as
 \begin{align}\label{Eq_C11}
 	x=\gamma\Big(\frac{\alpha^t-\gamma^{-1}\alpha^{-t}}{2}\Big)^2
 \end{align}
 where $\gamma=-1$ and $t$ varies over $\mathcal{T}_1$ for $p^n\equiv 3\bmod 4$ and  $\gamma=-\alpha$ and $t$ varies over $\mathcal{T}_2\cup \{0\}$ for $p^n\equiv 1\bmod 4$. Any element $x$ in $E_{10}$ can be represented as 
 \begin{align}\label{Eq_C10}
 	x=\Big(\frac{\delta^{2t}-\delta^{-2t}}{2}\Big)^2
 \end{align}
 where $\delta=\beta^{(p^n-1)/2}$ and $\beta$ is a primitive element in $\mathbb{F}_{p^{2n}}$ and $t$ varies over $\mathcal{T}_1$ for $p^n\equiv 3\bmod 4$ and over $\mathcal{T}_2\cup \{(p^n-1)/4\}=[1,(p^n-1)/4]$ for $p^n\equiv 1\bmod 4$. Finally, any element $x$ in $E_{01}$ can be represented as 
 \begin{align}\label{Eq_C01}
 	x=\Big(\frac{\delta^{2t+1}-\delta^{-(2t+1)}}{2}\Big)^2
 \end{align}
 where $t$ varies over $\mathcal{T}_1\cup \{0\}=[0,(p^n-3)/4]$ for $p^n\equiv 3\bmod 4$ and over $\mathcal{T}_2\cup \{0\}=[0,(p^n-5)/4]$ for $p^n\equiv 1\bmod 4$. 
 \end{lemma}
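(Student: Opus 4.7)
I would handle the four sets $E_{00},E_{11},E_{10},E_{01}$ in parallel, each via three steps: (i) expand the proposed formula for $x(t)$ to verify algebraically that $x$ and $x+1$ have the required quadratic-residue status, (ii) check that $t\mapsto x(t)$ is injective on the stated range of $t$, and (iii) match the cardinality of that range against the cyclotomic number $(i,j)$ supplied by Lemma \ref{Lem_Cyclotomy_l_2}, thereby concluding that the map is a bijection onto $E_{ij}$.

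Step (i) for $E_{00}$ is the one-line identity $\bigl((\alpha^t-\alpha^{-t})/2\bigr)^2+1=\bigl((\alpha^t+\alpha^{-t})/2\bigr)^2$, which simultaneously exhibits $x$ and $x+1$ as squares. For $E_{11}$ one multiplies by a non-square $\gamma$; when $p^n\equiv 3\bmod 4$ the choice $\gamma=-1$ works and gives $x+1=-\bigl((\alpha^t-\alpha^{-t})/2\bigr)^2$ directly, while when $p^n\equiv 1\bmod 4$ the element $-1$ is itself a square and one must instead take $\gamma=-\alpha$, expanding to obtain $x=-\alpha\cdot\bigl((\alpha^t+\alpha^{-t-1})/2\bigr)^2$ and $x+1=-\alpha\cdot\bigl((\alpha^t-\alpha^{-t-1})/2\bigr)^2$, each being a non-square times a square and hence a non-square.

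For $E_{10}$ and $E_{01}$ the key preliminary is the identity $\delta^{p^n}=-\delta^{-1}$, which is immediate from $\delta=\beta^{(p^n-1)/2}$ and $\beta^{(p^{2n}-1)/2}=-1$. From this, $\delta^{2t}+\delta^{-2t}$ and $\delta^{2t+1}-\delta^{-(2t+1)}$ are fixed by the $p^n$-th power and therefore lie in $\mathbb{F}_{p^n}$ (so their squares are squares in $\mathbb{F}_{p^n}$), whereas $\delta^{2t}-\delta^{-2t}$ and $\delta^{2t+1}+\delta^{-(2t+1)}$ satisfy $z^{p^n}=-z$, which forces $z\notin\mathbb{F}_{p^n}$ (for $z\neq 0$) and hence makes $z^2$ a non-square in $\mathbb{F}_{p^n}$—for otherwise $z^2=w^2$ with $w\in\mathbb{F}_{p^n}$ would give $z=\pm w\in\mathbb{F}_{p^n}$. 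Combined with the identity $y^2+1=\bigl((\delta^{2t}+\delta^{-2t})/2\bigr)^2$ for $y=(\delta^{2t}-\delta^{-2t})/2$ and its odd-exponent analogue, this places the proposed $x(t)$ in $E_{10}$ and $E_{01}$ respectively.

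The main obstacle is step (ii), the injectivity together with the matching of boundary values. From $x(t)=x(t')$ one derives $\alpha^t-\alpha^{-t}=\pm(\alpha^{t'}-\alpha^{-t'})$, whose solutions are $\alpha^{t'}\in\{\alpha^t,-\alpha^t,\alpha^{-t},-\alpha^{-t}\}$, i.e.\ $t'\equiv\pm t$ or $t'\equiv\pm t+(p^n-1)/2\pmod{p^n-1}$; the three nontrivial alternatives must be excluded by plain range estimates on the intervals $\mathcal{T}_1$ or $\mathcal{T}_2$, and an entirely analogous analysis applies over $\mathbb{F}_{p^{2n}}$ with $\delta$ (using its order $2(p^n+1)$). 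The asymmetric decisions about which endpoints to adjoin—$t=0$ for $E_{01}$ in both residue cases, $t=0$ for $E_{11}$ only when $p^n\equiv 1\bmod 4$, and $t=(p^n-1)/4$ for $E_{10}$ only when $p^n\equiv 1\bmod 4$—are dictated precisely by where the formula degenerates or where a would-be collision lands on the boundary, and these are exactly the tweaks needed to make the parameter count equal the corresponding cyclotomic number from Lemma \ref{Lem_Cyclotomy_l_2}.
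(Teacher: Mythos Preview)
Your plan is correct and follows essentially the same route as the paper: the paper derives each parametrization by writing $x+1=u^2$, $x=v^2$ (or with a non-square factor $\gamma$) and factoring $u^2-v^2=1$---over $\mathbb{F}_{p^{2n}}$ via $(u+\lambda v)^{p^n+1}=1$ for $E_{10},E_{01}$---and then invokes exactly the same four-fold symmetry $\{\alpha^t,-\alpha^t,\alpha^{-t},-\alpha^{-t}\}$ and the cardinality match against Lemma~\ref{Lem_Cyclotomy_l_2} that you describe. Your verification-style argument (checking square/non-square status directly via the Galois relation $\delta^{p^n}=-\delta^{-1}$) is simply the reverse direction of the same bijection, so the two proofs are equivalent in substance.
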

 \begin{proof}
 		For $x\in E_{00}$, we can set $x+1=u^2$ and $x=v^2$ for some $u,v\in \mathbb{F}_{p^n}^*$. Then we have $u^2-v^2=(u+v)(u-v)=1$. Let $u+v=\alpha^t$. Then we have $u=(\alpha^t+\alpha^{-t})/2$ and $v=(\alpha^t-\alpha^{-t})/2$. Hence $x$ in $E_{00}$ is represented as $x=(\alpha^t-\alpha^{-t})^2/4$. Then we have to determine the range over which $t$ varies. From Lemma \ref{Lem_Cyclotomy_l_2}, we know that $|E_{00}|=(p^n-3)/4$ for $p^n\equiv 3\bmod 4$ and $(p^n-5)/4$ for $p^n\equiv 1\bmod 4$. It is easy to check that $\{\alpha^t,\alpha^{-t},-\alpha^t,-\alpha^{-t}\}$ induce the same $x$ in (\ref{Eq_C00}). Note that $t=0$ makes $x=0$ and $t=(p^n-1)/4$ makes $x=0$ when $p^n\equiv 1\bmod 4$. Hence $t$ varies over $1\leq t\leq (p^n-3)/4$ for $p^n\equiv 3\bmod 4$ and $1\leq t\leq (p^n-5)/4$ for $p^n\equiv 1\bmod 4$.
 	
 For $x\in E_{11}$, we can set $x+1=\gamma u^2$ and $x=\gamma v^2$ for some $u,v\in \mathbb{F}_{p^n}^*$, where $\gamma$ is a nonsquare in $\mathbb{F}_{p^n}^*$. Then we have $u^2-v^2=(u+v)(u-v)=\gamma^{-1}$. Let $u+v=\alpha^t$. Then we have $u-v=\gamma^{-1}\alpha^{-t}$ and thus $u=(\alpha^t+\gamma^{-1}\alpha^{-t})/2$ and $v=(\alpha^t-\gamma^{-1}\alpha^{-t})/2$. Hence $x\in E_{11}$ is represented as $x=\gamma(\alpha^t-\gamma^{-1}\alpha^{-t})^2/4$. Now, we have to determine the range over which $t$ varies. From Lemma \ref{Lem_Cyclotomy_l_2}, we know that $|E_{11}|=(p^n-3)/4$ for $p^n\equiv 3\bmod 4$ and $(p^n-1)/4$ for $p^n\equiv 1\bmod 4$. It is easy to check that $\{\alpha^t,-\alpha^t,\gamma^{-1}\alpha^{-t},-\gamma^{-1}\alpha^{-t}\}$ induce the same $x$ in (\ref{Eq_C11}). Clearly, for the case of $p^n\equiv 3\bmod 4$, if we set $\gamma=-1$, then each $t$ in $\mathcal{T}_1$ makes distinct $x$ in $E_{11}$. For the case of $p^n\equiv 1\bmod 4$, each $t$ in $\mathcal{T}_2$ makes distinct $x$ in $E_{11}$ for $\gamma=-\alpha$ similarly. 

For $x\in E_{10}$ or $E_{01}$, the proof becomes a little more tricky. For $x\in E_{10}$, we can set $x+1=u^2$ and $x=\gamma v^2$ for some $u,v\in \mathbb{F}_{p^n}$, where $\gamma$ is a nonsquare in $\mathbb{F}_{p^n}^*$. Then we have $u^2-\gamma v^2=1$, which can be factorized in $\mathbb{F}_{p^{2n}}$ as $u^2-\gamma v^2=(u+\lambda v)(u-\lambda v)=(u+\lambda v)(u+\lambda^{p^n}v)=(u+\lambda v)^{p^n+1}=1$, where $\lambda$ and $-\lambda=\lambda^{p^n}$ are the two solutions in $\mathbb{F}_{p^{2n}}$ of $X^2=\gamma$ \cite{Dickson}. Since $u+\lambda v$ is the $(p^n+1)$-st root of unity in $\mathbb{F}_{p^{2n}}$, we can set $u+\lambda v=\beta^{(p^n-1)t}=\delta^{2t}$, where $\delta=\beta^{(p^n-1)/2}$ and $\beta$ is a primitive element of $\mathbb{F}_{p^{2n}}$. 
Since $u+\lambda v=\delta^{2t}$ and $u-\lambda v=\delta^{-2t}$, we have $x=(\delta^{2t}-\delta^{-2t})^2/4$. Then we have to determine the range over which $t$ varies. From Lemma \ref{Lem_Cyclotomy_l_2}, we know that $|E_{10}|=(p^n-3)/4$ for $p^n\equiv 3\bmod 4$ and $(p^n-1)/4$ for $p^n\equiv 1\bmod 4$. Note that $\{\delta^{2t},\delta^{-2t},-\delta^{2t},-\delta^{-2t}\}$ induce the same $x$ in (\ref{Eq_C10}). The values $t=0$ and $t=(p^{n}+1)/2$ which make $x=0$ and $t=(p^{n}+1)/4$ which makes $x=-1$ should be excluded. Then each $t\in \mathcal{T}_1$ gives distinct $x$ for $p^n\equiv 3\bmod 4$ and so does $t\in \mathcal{T}_2\cup \{(p^n-1)/4\}$ for $p^n\equiv 1\bmod 4$. We can prove the case for $x\in E_{01}$ similarly. \hfill $\Box$
 \end{proof}

\section{The Differential Spectrum of $x^{\frac{p^k+1}{2}}$ in $\mathbb{F}_{p^n}$}
\label{Dif_Fun1}

In \cite{Helleseth}, for an odd prime $p$, the upper bound on differential uniformity $\Delta_f$ of the power function $f(x)=x^{\frac{p^k+1}{2}}$ in $\mathbb{F}_{p^n}$ is derived. The result is stated as in the following theorem.

\begin{theorem}[Theorem 11 \cite{Helleseth}]
\label{Thm_Helleseth}
	Let $f(x)=x^d$ be the function defined on $\mathbb{F}_{p^n}$, where $p$ is an odd prime and $d=(p^k+1)/2$. Then we have
	\begin{align*}
		\Delta_f\leq \gcd \Big(\frac{p^k-1}{2},p^{2n}-1\Big).
	\end{align*} \hfill$\Box$
\end{theorem}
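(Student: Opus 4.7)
The plan is to transform the equation $(x+1)^d - x^d = b$ with $d = (p^k+1)/2$ into a much simpler equation over the quadratic extension $\mathbb{F}_{p^{2n}}$ via a rational change of variables. I would introduce the substitution
\begin{align*}
    x = \frac{(\omega-1)^2}{4\omega}, \qquad x+1 = \frac{(\omega+1)^2}{4\omega}, \qquad \omega \in \mathbb{F}_{p^{2n}}^*,
\end{align*}
which is the standard rational parametrisation of the conic $(2x+1)^2 - (2y)^2 = 1$ associated with $y^2 = x^2+x$ (take $\omega = 2x+1+2y$). Two facts about this substitution I would record at the outset: the map $\omega \mapsto x$ is 2-to-1 with $\omega$ and $\omega^{-1}$ giving the same $x$, and its only branch points $\omega = \pm 1$ correspond to the singleton fibres $x = 0$ and $x = -1$.

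The central observation is that because $2d = p^k+1$, raising the two displayed expressions to the $d$-th power and applying the Frobenius identity $(\omega \pm 1)^{p^k} = \omega^{p^k} \pm 1$ produces a dramatic cancellation: the cross terms disappear and the equation $(x+1)^d - x^d = b$ collapses to
\begin{align*}
    \omega^{p^k} + \omega = 2^{p^k} b\,\omega^d.
\end{align*}
Dividing by $\omega^d$ and writing $m = (p^k-1)/2$ converts this to the Chebyshev-type relation $\omega^m + \omega^{-m} = 2^{p^k} b$. Setting $\xi = \omega^m$ then reduces matters to the single quadratic $\xi^2 - 2^{p^k}b\,\xi + 1 = 0$, which has at most two roots $\xi_1,\xi_2 \in \mathbb{F}_{p^{2n}}$, necessarily satisfying $\xi_1\xi_2 = 1$.

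The counting step then runs as follows. For each root $\xi_i$ the equation $\omega^m = \xi_i$ has either $0$ or exactly $\gcd(m, p^{2n}-1)$ solutions in $\mathbb{F}_{p^{2n}}^*$, giving at most $2\gcd(m, p^{2n}-1)$ admissible $\omega$ in total. Because the involution $\omega \leftrightarrow \omega^{-1}$ swaps the two $\xi$-fibres, these $\omega$'s pair off into orbits of size two under the involution, each orbit descending to a single $x$. The number of solutions $x$ is therefore at most $\gcd((p^k-1)/2, p^{2n}-1)$, which is the claimed bound.

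The main obstacle I expect is careful bookkeeping at the exceptional parameter values. When $2^{p^k}b = \pm 2$ the quadratic in $\xi$ acquires a double root at $\xi = \pm 1$, simultaneously halving the $\omega$-count and admitting the involution's fixed points $\omega = \pm 1$ as singleton fibres over $x \in \{0,-1\}$. One must verify that the resulting count $(|\{\omega\}|+\#\text{singletons})/2$ still respects $\leq \gcd((p^k-1)/2, p^{2n}-1)$; this is straightforward once one observes that $x \in \{0,-1\}$ can arise only at the degenerate values of $b$. The case $b=0$ is separate but easier, since the roots of $\xi^2+1=0$ are not fixed by the involution and so no singleton contributions appear.
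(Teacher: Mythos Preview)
The paper does not prove this theorem; it is quoted from Helleseth, Rong, and Sandberg \cite{Helleseth} as a known result and marked with a $\Box$ immediately after the statement. So there is no proof in the paper to compare against.

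Your argument is correct and gives a clean self-contained proof of the bound. Interestingly, your substitution $x=(\omega-1)^2/(4\omega)$ is precisely the parametrisation the paper develops piecewise in Lemma~\ref{Lem_Cyc}: setting $\omega=\alpha^{2t}$ in (\ref{Eq_C00}) recovers your formula for $x\in E_{00}$, and $\omega=\delta^{4t}$ in (\ref{Eq_C10}) does the same for $x\in E_{10}$. What you gain by working uniformly over $\mathbb{F}_{p^{2n}}^*$ is that the four cyclotomic cases $E_{ij}$ collapse into a single quadratic in $\xi=\omega^{(p^k-1)/2}$, which is exactly what is needed for an upper bound. What the paper gains from the case split in Lemmas~\ref{Lem_MapProperty_00}--\ref{Lemm_Relationship} is the finer information required to compute the exact differential spectrum in Theorem~\ref{Thm_1}, which is its actual goal. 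Two small remarks: since $2\in\mathbb{F}_p$ is fixed by Frobenius, $2^{p^k}=2$ in $\mathbb{F}_{p^n}$, so your quadratic is simply $\xi^2-2b\xi+1=0$ and the degenerate values are $b=\pm 1$; and your bookkeeping at those values is indeed fine, because with $G=\gcd(m,p^{2n}-1)$ solutions to $\omega^m=\pm 1$ and at most two involution-fixed points among them, the orbit count $(G+f)/2$ with $f\le 2$ never exceeds $G$ (the edge case $G=1$ forces $m$ odd, hence $f\le 1$).
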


However, in some cases of $p$, $n$, and $k$, the upper bound is not tight, which motivates us to derive the differential spectrum and the differential uniformity $\Delta_f$. The following lemmas are needed for the proof of the subsequent  lemmas and theorem. 

\begin{lemma}\label{Lem_Remainder}
	Define the set $\mathcal{A}=[1,N]$ for a positive integer $N$. Assume that $N\equiv r\bmod v$  for a nonzero integer $v$ and $q$ is a quotient so that $N=qv+r$. Let $n_{\mu}$ denote the number of elements $a\in \mathcal{A}$ such that $a\bmod v$ is either $+\mu$ or $-\mu$ for an integer $0\leq \mu\leq v/2$. Then $n_{\mu}$ is computed as:

\noindent 1) When $\mu=0$ or ${v/2}$ for even $v$;
	\begin{align*}
		n_{\mu}=
		\begin{cases}
		q+1,~&{\rm for~}\mu=\frac{v}{2}\leq r{~\rm with~even~}v\\
		q,~&{\rm for~}\mu=0~{\rm or~}\mu=\frac{v}{2}> r{~\rm with~even~}v.
		\end{cases}
	\end{align*}
	
\noindent 2) When $0<\mu<v/2$;		
	\begin{align*}
		n_{\mu}=
		\begin{cases}
		2(q+1),~&{\rm for~}v-r\leq \mu\leq r\\
		2q+1,~&{\rm for~}\mu\geq \max(v-r,r+1) {~\rm or~}\mu\leq \min(r,v-r-1)\\
		2q,~&{\rm for~}r<\mu<v-r.
		\end{cases}
	\end{align*} \hfill$\Box$
\end{lemma}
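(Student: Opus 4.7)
The proof is a straightforward counting argument, splitting on the parity/location of $\mu$ inside $[0,v/2]$ and carefully tracking whether the last arithmetic progression hits the tail $[qv+1,qv+r]$ or not.

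The plan is as follows. First observe that the elements of $\mathcal{A}=[1,N]$ with a fixed residue $\rho$ modulo $v$ (where $1\le \rho\le v$) form an arithmetic progression of common difference $v$. Writing $N=qv+r$ with $0\le r<v$, such an element has the form $\rho+jv$ with $j\ge 0$ and lies in $\mathcal{A}$ iff $\rho+jv\le qv+r$. A direct calculation gives that the count of such elements is $q+1$ when $\rho\le r$ (interpreting $\rho=v$ as requiring $r\ge 0$, which always holds, so I treat the class $\rho\equiv 0$ separately as $\rho=v$) and $q$ otherwise. This is the single fact that drives the whole lemma.

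Next I handle Case~1. For $\mu=0$, the relevant residue is $0$, represented by $\rho=v$, and the progression $v,2v,\ldots,qv$ has exactly $q$ terms, giving $n_0=q$. For $\mu=v/2$ with $v$ even, the residue is $v/2$, and the progression $v/2,\,v/2+v,\ldots$ stays in $\mathcal{A}$ for one extra step exactly when $r\ge v/2$; this yields $n_{v/2}=q+1$ if $v/2\le r$ and $n_{v/2}=q$ otherwise, matching the stated formula.

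For Case~2, with $0<\mu<v/2$, the two residues $+\mu$ and $-\mu\equiv v-\mu$ are distinct and both lie strictly between $0$ and $v$, so I count them separately and add. Applying the basic fact above, the $+\mu$ progression contributes $q+1$ when $\mu\le r$ and $q$ when $\mu>r$, while the $-\mu$ progression contributes $q+1$ when $v-\mu\le r$, i.e.\ $\mu\ge v-r$, and $q$ otherwise. The four combinations of these two dichotomies produce exactly the three cases listed: both ``$+1$'' occur when $v-r\le\mu\le r$ (giving $2(q+1)$); neither occurs when $r<\mu<v-r$ (giving $2q$); and exactly one occurs in the remaining range, which is precisely $\mu\ge\max(v-r,r+1)$ or $\mu\le\min(r,v-r-1)$ (giving $2q+1$).

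There is no real obstacle here; the only subtlety is bookkeeping, namely making sure that the boundary inequalities ($\le$ vs.\ $<$) are consistent between the two progressions and that the three subcases of Case~2 partition the range $0<\mu<v/2$ without overlap. I would verify this partition explicitly (noting that $v-r\le r$ iff $r\ge v/2$, so the ``both'' region can be empty) before concluding.
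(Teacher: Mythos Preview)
Your proposal is correct and is exactly the ``simple counting'' the paper alludes to; in fact the paper omits the proof entirely, stating only that it is nothing more than elementary counting. Your residue-by-residue count (with the $\rho\le r$ versus $\rho>r$ dichotomy) is the natural way to carry this out, and your case analysis for $0<\mu<v/2$ matches the stated formula precisely.
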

We will omit the proof because it is nothing more than a simple counting.

\begin{lemma}\label{Lem_gcd}
	Let $p$ be an odd prime and $l=\gcd(a,b)$. Let $a'=a/l$ and $b'=b/l$. Then
	\begin{align*}
		\gcd(p^a+1,p^b-1)=
		\begin{cases}
			p^l+1,&~{\rm for~odd~}a'~{\rm and~even}~b'\\
			2,&~{\rm otherwise}.
		\end{cases}
	\end{align*}
\end{lemma}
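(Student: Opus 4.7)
The plan is to reduce the computation of $d := \gcd(p^a+1, p^b-1)$ to a bounded gcd using the classical identity $\gcd(p^x-1, p^y-1) = p^{\gcd(x,y)} - 1$. Since $p^a+1$ divides $p^{2a}-1$, we have
\begin{align*}
d \mid \gcd(p^{2a}-1,\ p^b-1) = p^{\gcd(2a,b)}-1.
\end{align*}
Using $\gcd(a',b')=1$, a short parity analysis gives $\gcd(2a,b)=2l$ precisely when $a'$ is odd and $b'$ is even, and $\gcd(2a,b)=l$ in every other case. This immediately splits the proof into the two cases stated in the lemma.

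In the ``otherwise'' case I have $d\mid p^l-1$. Reducing $p^a+1$ modulo $p^l-1$ using $p^l\equiv 1$ gives $p^a+1\equiv 2$, so $d\mid 2$. Since $p$ is odd, both $p^a+1$ and $p^b-1$ are even, forcing $2\mid d$ and hence $d=2$.

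For the main case ($a'$ odd, $b'$ even), the first step is to show $p^l+1\mid d$: since $a'$ is odd, $p^l+1$ divides $(p^l)^{a'}+1=p^a+1$, and since $b'$ is even, $2l\mid b$ so $p^l+1\mid p^{2l}-1\mid p^b-1$. Combined with $d\mid p^{2l}-1=(p^l-1)(p^l+1)$, writing $d=(p^l+1)t$ yields $t\mid p^l-1$. On the other hand $d\mid p^a+1$ gives $t\mid Q$ where $Q=(p^a+1)/(p^l+1)=\sum_{i=0}^{a'-1}(-1)^i(p^l)^{a'-1-i}$. Reducing $Q$ modulo $p^l-1$ (where $p^l\equiv 1$) turns the sum into $\sum_{i=0}^{a'-1}(-1)^i$, which equals $1$ because $a'$ is odd. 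Thus $t\mid\gcd(Q,p^l-1)=1$, so $t=1$ and $d=p^l+1$.

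The only subtle point—and the step I would double-check most carefully—is the pin-down $t=1$ in the main case: it requires the two divisibility constraints on $t$ (namely $t\mid p^l-1$ and $t\mid Q$) together with the modular reduction $Q\equiv 1\pmod{p^l-1}$ that depends crucially on the parity of $a'$. Everything else (the initial reduction via $p^{2a}-1$, the parity tally for $\gcd(2a,b)$, and the ``otherwise'' case) is routine.
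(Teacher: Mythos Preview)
Your proof is correct, but it follows a different line from the paper's. The paper argues via B\'ezout's identity: writing $l=ax+by$ gives $p^l\equiv(p^a)^x(p^b)^y\equiv(-1)^x\pmod m$, so $m\mid p^l+1$ or $m\mid p^l-1$; it then splits into three parity cases on $(a',b')$ and finishes each with the ``$m\mid p^a\pm1$ and $m\mid p^a\mp1$ imply $m\mid2$'' trick. Your route instead leverages the classical identity $\gcd(p^x-1,p^y-1)=p^{\gcd(x,y)}-1$ through the embedding $p^a+1\mid p^{2a}-1$, reducing the problem to computing $\gcd(2a,b)$; in the nontrivial case you then pin down $d=p^l+1$ by the nice observation that $Q=(p^a+1)/(p^l+1)\equiv1\pmod{p^l-1}$, which kills the residual factor $t$. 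Both arguments are short; yours has the advantage of making the appearance of $p^{2l}-1$ completely transparent (it is exactly $p^{\gcd(2a,b)}-1$), while the paper's B\'ezout step yields the dichotomy $m\mid p^l\pm1$ in one stroke without needing the auxiliary quotient $Q$.
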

\begin{proof}
	Let $m=\gcd(p^a+1,p^b-1)$. Now, $p^l\equiv \pm 1\bmod m$ will be proved. By B\'ezout's identity, $l$ can be expressed as $l=ax+by$, where $x$ and $y$ are some integers.  Then we have
\begin{align}\label{Eq_Bezout}
	p^l\equiv p^{ax+by}\equiv (p^a)^x(p^b)^y\equiv (-1)^x(1)^y \equiv \pm 1 \bmod m,
\end{align}
which means that $m|p^l+1$ or $m|p^l-1$. Now, $m$ will be determined in the following three cases:

\vspace{0.1cm}
\noindent {\bf Case 1)} $2l|a$; 

From (\ref{Eq_Bezout}), we have $m|p^{2l}-1$. Since $m|p^{2l}-1$ and $a'$ is even, we have $m|p^a-1$. Since $m|p^a+1$, we have $m|((p^a+1)-(p^a-1))$, i.e., $m|2$. Since $m\geq 2$, we have $m=2$. 

\noindent {\bf Case 2)} For odd $a'$ and even $b'$; 

Since $a'$ is odd and $b'$ is even, we have $p^l+1|p^a+1$ and $p^l+1|p^b-1$. Hence we have $p^l+1|m$.  From $p^l+1|m$ and (\ref{Eq_Bezout}), we have $m=p^l+1$.

\noindent {\bf Case 3)} For odd $a'$ and odd $b'$; 

Assume that $m|p^l+1$. Since $m|p^l+1$ and $b'$ is odd, we have $m|p^b+1$. Since $m|p^b-1$, we have $m=2$. Now, assume that $m|p^l-1$. Since $m|p^a-1$ and $m|p^a+1$, we have $m=2$.  \hfill $\Box$
\end{proof}

Let $D_f(x)=f(x+1)-f(x)$ and $\mathcal{I}_{ij}$ be the image of $E_{ij}$ under $D_f$, that is,
\begin{align*}
  \mathcal{I}_{ij}=\{D_f(x)|x\in E_{ij}\}  
\end{align*}
where $i,j\in \{0,1\}$. Also, define the set $\mathcal{U}_{ij}(b)$, $b\in \mathcal{I}_{ij}$, as the set of elements $x\in E_{ij}$ such that $D_f(x)=b$. Let $\theta:t\mapsto x$ be the bijective mapping from $t$ to $x$ given in Lemma \ref{Lem_Cyc}. In the following Lemmas \ref{Lem_MapProperty_00}--\ref{E01_E10}, the cardinalities of each $\mathcal{I}_{ij}$ and $\mathcal{U}_{ij}(b)$'s, $b\in \mathcal{I}_{ij}$, will be determined. Let $e=\gcd(n,k)$ and $g=\gcd(2n,k)$ in the remainder of this section.

\begin{lemma}
\label{Lem_MapProperty_00}
For $\mathcal{I}_{00}$ and $D_f|_{E_{00}}$, we have

\noindent 1) For an odd $n/e$;
\begin{itemize}
	\item[] $|\mathcal{I}_{00}|=(p^n+p^e-2)/(2(p^e-1))$
	\item[] $|\mathcal{U}_{00}(b)|=\begin{cases}\frac{p^e-3}{4},&~{\rm for~}b=1{~\rm and~}p^n\equiv 3\bmod 4\\ \frac{p^e-5}{4},&~{\rm for~}b=1{~\rm and~}p^n\equiv 1\bmod 4 \\ \frac{p^e-1}{2},&~{\rm for~}b(\neq 1)\in \mathcal{I}_{00}.\end{cases}$
\end{itemize}

\noindent 2) For an even $n/e$;
\begin{itemize}
	\item[] $|\mathcal{I}_{00}|=(p^n+2p^e-3)/(2(p^e-1))$
	\item[] $|\mathcal{U}_{00}(b)|=\begin{cases}\frac{p^e-3}{4},&~{\rm for~}b=\pm 1{~\rm and~}p^e\equiv 3\bmod 4\\ \frac{p^e-5}{4},&~{\rm for~}b=1{~\rm and~}p^e\equiv 1\bmod 4 \\ \frac{p^e-1}{4},&~{\rm for~}b=-1{~\rm and~}p^e\equiv 1\bmod 4 \\ \frac{p^e-1}{2},&~{\rm for~}b(\neq \pm1)\in \mathcal{I}_{00}.\end{cases}$
\end{itemize}
\end{lemma}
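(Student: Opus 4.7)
The plan is to parametrize $E_{00}$ via Lemma~\ref{Lem_Cyc}, reduce $D_f$ to an elementary function of the parameter $t$, identify the equivalence classes of $t$ that yield the same differential value, and count them using Lemma~\ref{Lem_Remainder}.

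First I would simplify $D_f$. For $x \in E_{00}$, Lemma~\ref{Lem_Cyc} lets me write $x = v^2$ with $v = (\alpha^t - \alpha^{-t})/2$ and $x+1 = u^2$ with $u = (\alpha^t + \alpha^{-t})/2$. Since $(p^k+1)/2$ is a positive integer, $x^{(p^k+1)/2} = v^{p^k+1}$ and $(x+1)^{(p^k+1)/2} = u^{p^k+1}$, independent of the choice of square-root sign. Expanding $u \cdot u^{p^k}$ and $v \cdot v^{p^k}$ using $u^{p^k} = (\alpha^{tp^k} + \alpha^{-tp^k})/2$, the outer terms $\alpha^{\pm t(p^k+1)}/4$ cancel and I obtain
\[
D_f(x) \;=\; \frac{\alpha^{t(p^k-1)} + \alpha^{-t(p^k-1)}}{2}.
\]

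Next I would reduce to a counting problem. Let $y_t = \alpha^{t(p^k-1)}$. Since $\alpha^{p^k-1}$ has order $M := (p^n-1)/(p^e-1)$, the element $y_t$ lies in the unique subgroup $H \leq \mathbb{F}_{p^n}^*$ of order $M$, and $D_f(x)$ depends only on $t \bmod M$. Two parameters $t_1,t_2$ give the same differential value iff $y_{t_1} = y_{t_2}^{\pm 1}$, iff $t_1 \equiv \pm t_2 \pmod M$. The exceptional fibers correspond to the fixed points of $y \mapsto y^{-1}$ on $H$: namely $y = 1$, which always gives $b = 1$, and $y = -1$, which lies in $H$ iff $M$ is even. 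Writing $M = 1 + p^e + p^{2e} + \cdots + p^{(n/e-1)e}$ as a sum of $n/e$ odd terms, $M$ is even precisely when $n/e$ is even, which matches the case split in the statement and the appearance of $b = -1$ only in the even branch.

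Finally I would invoke Lemma~\ref{Lem_Remainder} with $v = M$ and $N \in \{(p^n-3)/4,\,(p^n-5)/4\}$ (selected by $p^n \bmod 4$). Using $p^n - 1 = M(p^e - 1)$ and the $2$-adic valuation of $p^e - 1$, I can compute the quotient $q$ and remainder $r$ in $N = qM + r$ explicitly. Each orbit $\{\mu, -\mu\} \bmod M$ that meets $[1,N]$ contributes one element to $\mathcal{I}_{00}$ and $n_\mu$ preimages to $|\mathcal{U}_{00}(b_\mu)|$; the special orbits $\mu = 0$ (giving $b = 1$) and, when $n/e$ is even, $\mu = M/2$ (giving $b = -1$) produce the exceptional preimage counts, while generic orbits each contain $(p^e-1)/2$ elements.

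The main obstacle is the bookkeeping in the final step: one must split into three sub-cases ($n/e$ odd with $p^n \equiv p^e \pmod 4$; $n/e$ even with $p^e \equiv 3 \pmod 4$; $n/e$ even with $p^e \equiv 1 \pmod 4$), determine which branch of Lemma~\ref{Lem_Remainder} applies to the special $\mu \in \{0, M/2\}$, and verify that generic $\mu$ give $n_\mu = (p^e-1)/2$. The identity $\sum_b |\mathcal{U}_{00}(b)| = |E_{00}|$ serves as a useful sanity check throughout.
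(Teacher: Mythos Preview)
Your proposal is correct and follows essentially the same approach as the paper: parametrize $E_{00}$ via Lemma~\ref{Lem_Cyc}, derive $D_f(x)=(\alpha^{t(p^k-1)}+\alpha^{-t(p^k-1)})/2$, reduce collisions to $t_1\equiv\pm t_2\pmod{M}$ with $M=(p^n-1)/(p^e-1)$, and count the classes $\{\pm\mu\}$ using Lemma~\ref{Lem_Remainder}, treating $\mu=0$ (and $\mu=M/2$ when $n/e$ is even) as the exceptional fibers. The paper carries out exactly the explicit $(q,r)$ computations you defer, splitting into the same three sub-cases you identify.
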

\vspace{0.1cm}
\begin{proof}
From Lemma \ref{Lem_Cyc}, $D_f(x)|_{E_{00}}$ is represented in terms of $t$ as
\begin{align}\label{Eq_Dfx_C00}
	D_f(x)|_{E_{00}}=\frac{\alpha^{(p^k-1)t}+\alpha^{-(p^k-1)t}}{2}\triangleq {M}\big(\alpha^{(p^k-1)t}\big)
\end{align}
where $x=(\alpha^t-\alpha^{-t})^2/4$ and $t$ varies over $\mathcal{T}_1$ for $p^n\equiv 3\bmod 4$ and $\mathcal{T}_2$ for $p^n\equiv 1\bmod 4$. Assume that there exist $x_1$ and $x_2(\neq x_1)$ in $E_{00}$ such that $D_f(x_1)=D_f(x_2)$. Let $t_1=\theta^{-1}(x_1)$ and $t_2=\theta^{-1}(x_2)$. From (\ref{Eq_Dfx_C00}), it is straightforward that $t_1$ and $t_2$ satisfy either
\begin{align}\label{Condition1-1} 
	t_1+ t_2\equiv 0 \bmod \frac{p^n-1}{p^e-1}
\end{align}
or
\begin{align}\label{Condition1-2} 
	t_1\equiv t_2 \bmod \frac{p^n-1}{p^e-1}.
\end{align}
Define the set 
\begin{align}\label{Def_S_mu}
S_{\mu}=
\begin{cases}
\{t\equiv \pm \mu \bmod v ~|~ t\in \mathcal{T}_1\},&~{\rm for~}p^n\equiv 3\bmod 4 \\
\{t\equiv \pm \mu \bmod v ~|~ t\in \mathcal{T}_2\},&~{\rm for~}p^n\equiv 1\bmod 4
\end{cases}
\end{align}
where $v=(p^n-1)/(p^e-1)$ and $0\leq \mu\leq \lfloor v/2 \rfloor$. Then, from (\ref{Condition1-1}) and (\ref{Condition1-2}), all the elements in $S_{\mu}$ give a single value ${ M}(\alpha^{(p^k-1)t})$ in $\mathcal{I}_{00}$ and the elements in each $S_{\mu}$ give   distinct values in $\mathcal{I}_{00}$. 

Therefore, $|\mathcal{I}_{00}|$ is equal to the number of distinct sets $S_{\mu}$'s. Since $0\leq \mu\leq \lfloor v/2 \rfloor$, $|\mathcal{I}_{00}|$ is equal to $(v+1)/2$ for odd $v$ and $v/2+1$ for even $v$. Note that $v$ is even when $n/e$ is even and odd when $n/e$ is odd.

Clearly, $S_{\mu}$ corresponds to $\mathcal{U}_{00}({ M}(\alpha^{(p^k-1)t}))$. Thus, obtaining $|\mathcal{U}_{00}(b)|$ for $b\in \mathcal{I}_{00}$ is finding out the cardinality of corresponding $S_{\mu}$, which can be done easily by applying Lemma \ref{Lem_Remainder}.

Now, in the case when $p^n\equiv 3\bmod 4$, we have
\begin{align*}
	S_{\mu}=\{t\equiv \pm \mu \bmod v~ |~ t\in \mathcal{T}_1\}.
\end{align*}
Since $\frac{p^n-3}{4}=\frac{p^e-3}{4}v+\frac{v-1}{2}$, from Lemma \ref{Lem_Remainder}, we have
\begin{align*}
	|S_{\mu}|=
	\begin{cases}
		\frac{p^e-1}{2},&~{\rm for~}0<\mu < \frac{v}{2}\\
		\frac{p^e-3}{4},&~{\rm for~}\mu=0.
	\end{cases}
\end{align*}
Since $n/e$ is odd, i.e., $v$ is odd, in this case, we don't need to consider $S_{v/2}$. Note that $S_0$ corresponds to $\mathcal{U}_{00}(1)$.

Similarly, in the case when $p^n\equiv 1\bmod 4$, we have
\begin{align*}
	S_{\mu}=\{t\equiv \pm \mu \bmod v~|~t\in \mathcal{T}_2\}.
\end{align*}
Clearly, $p^e$ can be congruent to $3$ or $1$ modulo $4$ in this case. Since $\frac{p^n-5}{4}=\frac{p^e-3}{4}v+(\frac{v}{2}-1)$ for $p^e \equiv 3\bmod 4$, from Lemma \ref{Lem_Remainder}, we have
\begin{align*}
	|S_{\mu}|=
	\begin{cases}
		\frac{p^e-1}{2},&~{\rm for~}0<\mu < \frac{v}{2}\\
		\frac{p^e-3}{4},&~{\rm for~}\mu=0~{\rm~or~}\frac{v}{2}.
	\end{cases}
\end{align*}
Note that $n/e$ is even, i.e., $v$ is even, in this case, $S_{v/2}$ should be considered.  Note that $S_0$ corresponds to $\mathcal{U}_{00}(1)$ and $S_{{v}/{2}}$ corresponds to $\mathcal{U}_{00}(-1)$. Since $\frac{p^n-5}{4}=\frac{p^e-5}{4}v+(v-1)$ for $p^e \equiv 1\bmod 4$, from Lemma \ref{Lem_Remainder}, we have
\begin{align*}
	|S_{\mu}|=
	\begin{cases}
		\frac{p^e-1}{2},&~{\rm for~}0<\mu < \frac{v}{2}\\
		\frac{p^e-5}{4},&~{\rm for~}\mu=0\\
		\frac{p^e-1}{4},&~{\rm for~}\mu=\frac{v}{2} {\rm ~and~ even~} \frac{n}{e}.
	\end{cases}
\end{align*}
Here, $S_{v/2}$ should be considered only when $n/e$ is even. Note that $S_0$ corresponds to $\mathcal{U}_{00}(1)$ and $S_{{v}/{2}}$ corresponds to $\mathcal{U}_{00}(-1)$.  \hfill $\Box$
\end{proof}

\begin{lemma}
\label{Lem_MapProperty_11}
For $\mathcal{I}_{11}$ and $D_f|_{E_{11}}$, we have

\noindent 1) For an odd $n/e$;
\begin{itemize}
	\item[] $|\mathcal{I}_{11}|=(p^n+p^e-2)/(2(p^e-1))$
	\item[] $|\mathcal{U}_{11}(b)|=\begin{cases}\frac{p^e-3}{4},&~{\rm for~}b=1,~p^n\equiv 3\bmod 4,~{\rm even~}k/e\\&~~~~~{\rm or~}b=-1,~p^n\equiv 3\bmod 4,~{\rm odd~}k/e\\ \frac{p^e-1}{4},&~{\rm for~}b=1,~p^n\equiv 1\bmod 4,~{\rm even~}k/e\\&~~~~~{\rm or~}b=-1,~p^n\equiv 1\bmod 4,~{\rm odd~}k/e \\ \frac{p^e-1}{2},&~{\rm for~remaining~}b\in \mathcal{I}_{11}.\end{cases}$
\end{itemize}

\noindent 2) For an even $n/e$;
\begin{itemize}
	\item[] $|\mathcal{I}_{11}|=(p^n-1)/(2(p^e-1))$
	\item[] $|\mathcal{U}_{11}(b)|=(p^e-1)/2$ {\rm for~any} $b\in \mathcal{I}_{11}$.
\end{itemize}
\end{lemma}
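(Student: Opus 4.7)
My plan is to mirror the proof of Lemma \ref{Lem_MapProperty_00}: first derive an explicit formula for $D_f|_{E_{11}}$ in terms of the parameter $t$ from Lemma \ref{Lem_Cyc}, then identify the equivalence classes of $t$ on which $D_f$ is constant, and finally count the orbits and their preimages via Lemma \ref{Lem_Remainder}. Taking $x = \gamma v^2$, $x+1 = \gamma u^2$ with $u+v = \alpha^t$ and $u-v = \gamma^{-1}\alpha^{-t}$, and expanding
\[
D_f(x) = \gamma^{(p^k+1)/2}\bigl(u^{p^k+1} - v^{p^k+1}\bigr)
\]
by the same Frobenius-product trick that produced (\ref{Eq_Dfx_C00}), I obtain the $\gamma$-twisted expression
\[
D_f(x)|_{E_{11}} = \tfrac{1}{2}\bigl(\eta\,\alpha^{(p^k-1)t} + \eta^{-1}\alpha^{-(p^k-1)t}\bigr), \qquad \eta := \gamma^{(p^k-1)/2}.
\]

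When $p^n \equiv 3 \bmod 4$, $\gamma=-1$ gives $\eta=(-1)^{(p^k-1)/2}\in\{\pm 1\}$. Since $n$ is odd, $e$ is odd too, so the parity of $k/e$ matches that of $k$; a short $2$-adic computation then pins $\eta=+1$ when $k/e$ is even and $\eta=-1$ when $k/e$ is odd. The formula for $D_f|_{E_{11}}$ therefore coincides, up to the overall sign $\eta$, with equation (\ref{Eq_Dfx_C00}); the collision rule $t_1 \equiv \pm t_2 \bmod v$ with $v=(p^n-1)/(p^e-1)$ and the $|S_\mu|$ counts via Lemma \ref{Lem_Remainder} transfer verbatim from the $E_{00}$ analysis, the only change being that the singleton $\mu=0$ orbit now corresponds to $b=\eta$ rather than $b=1$. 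This delivers all the claims for $p^n\equiv 3\bmod 4$.

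The substantive case is $p^n\equiv 1\bmod 4$, where $\gamma=-\alpha$ forces $\eta=(-1)^s \alpha^s$ with $s=(p^k-1)/2$, no longer in $\{\pm 1\}$. Substituting yields
\[
D_f(x)|_{E_{11}} = \tfrac{(-1)^s}{2}\bigl(\alpha^{s(2t+1)} + \alpha^{-s(2t+1)}\bigr),
\]
so $D_f(x_1)=D_f(x_2)$ forces $s(2t_1+1)\equiv \pm s(2t_2+1) \bmod (p^n-1)$, which after cancelling common factors becomes the shifted collision rule
\[
t_1 \equiv t_2 \bmod v \quad \text{or} \quad t_1 \equiv -1-t_2 \bmod v.
\]
I then split on the parity of $n/e$. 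For $n/e$ even, $v$ is even and the involution $r\mapsto -1-r$ on $\mathbb{Z}/v\mathbb{Z}$ is fixed-point free, giving $v/2$ orbits and $|\mathcal{I}_{11}|=v/2=(p^n-1)/(2(p^e-1))$; a residue count on $t\in[0,(p^n-5)/4]$ via Lemma \ref{Lem_Remainder}, case-split by $p^e\bmod 4$, shows each pair receives exactly $(p^e-1)/2$ values of $t$. For $n/e$ odd, $v$ is odd and the involution has the unique fixed point $r=(v-1)/2$, producing one singleton orbit and $(v-1)/2$ true pairs, so $|\mathcal{I}_{11}|=(v+1)/2=(p^n+p^e-2)/(2(p^e-1))$; moreover $n/e$ odd forces $p^e\equiv 1 \bmod 4$, so Lemma \ref{Lem_Remainder} assigns $(p^e-1)/4$ preimages to the singleton and $(p^e-1)/2$ to each remaining orbit.

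The main obstacle is identifying the exact value $b$ on the singleton orbit in the $n/e$ odd subcase. Setting $t=(v-1)/2$ gives $2t+1=v$, so the value of $D_f$ reduces to $\alpha^{sv}\bmod (p^n-1)$; since $sv=\bigl((p^k-1)/(p^e-1)\bigr)(p^n-1)/2$ and $(p^k-1)/(p^e-1)\equiv k/e \bmod 2$, I obtain $\alpha^{sv}=(-1)^{k/e}$. Combined with $(-1)^s=1$ (which holds here because $p^e\equiv 1 \bmod 4$ implies $p^k=(p^e)^{k/e}\equiv 1 \bmod 4$), this pins the singleton's differential value at $b=1$ for even $k/e$ and $b=-1$ for odd $k/e$, matching the stated cases.
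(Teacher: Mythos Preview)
Your argument is correct and follows essentially the same route as the paper's proof: you derive the $\gamma$-twisted version of $M(\alpha^{(p^k-1)t})$ on $E_{11}$, reduce the $p^n\equiv 3\bmod 4$ case to Lemma~\ref{Lem_MapProperty_00} via the sign $\eta=(-1)^{(p^k-1)/2}$, and for $p^n\equiv 1\bmod 4$ obtain the shifted collision rule $t_1\equiv t_2$ or $t_1+t_2+1\equiv 0\pmod v$ and count orbits of $r\mapsto -1-r$ on $\mathbb{Z}/v\mathbb{Z}$, exactly as the paper does with its classes $R_i\cup R_{v-1-i}$. Your explicit verification that the singleton orbit at $t=(v-1)/2$ yields $b=(-1)^{k/e}$, and your case-split by $p^e\bmod 4$ when $n/e$ is even (where the individual $|R_i|$ are unequal but each orbit pair still totals $(p^e-1)/2$), are in fact more careful than the paper, which simply asserts these points.
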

\begin{proof}
~~\\
{\bf Case 1)} For $p^n\equiv 3\bmod 4$;
\vspace{0.1cm}

By selecting $\gamma=-1$ in (\ref{Eq_C11}), $D_f(x)|_{E_{11}}$ is represented as
\begin{align}\label{Eq_0423_001}
	D_f(x)|_{E_{11}}={M}\big((-1)^{\frac{p^k-1}{2}}\alpha^{(p^k-1)t}\big)=(-1)^{\frac{p^k-1}{2}}{ M}\big(\alpha^{(p^k-1)t}\big)
\end{align}
where $t\in \mathcal{T}_1$ and $x=-(\alpha^t+\alpha^{-t})^2/4$. 

Since 
\begin{align*}
	(-1)^{\frac{p^k-1}{2}}=
	\begin{cases}
		1,&~{\rm if~}p^k\equiv 1\bmod 4\\
		-1,&~{\rm if~}p^k\equiv 3\bmod 4
	\end{cases}
\end{align*}
and $t$ varies over $\mathcal{T}_1$, we have $\mathcal{I}_{00}=\mathcal{I}_{11}$ for $p^k\equiv 1\bmod 4$ and $\mathcal{I}_{00}=-\mathcal{I}_{11}$ for $p^k\equiv 3\bmod 4$. Therefore, $|\mathcal{I}_{11}|$ and $|\mathcal{U}_{11}(b)|$ are equal to $|\mathcal{I}_{00}|$ and $|\mathcal{U}_{00}(b)|$ in Lemma  \ref{Lem_MapProperty_00}, respectively. Note that $n/e$ is odd in this case.

\vspace{0.1cm}
\noindent {\bf Case 2)} For $p^n\equiv 1\bmod 4$;
\vspace{0.1cm}

In this case, we select $\gamma=-\alpha$. Then $D_f(x)|_{E_{11}}$ is represented as 
\begin{align}\label{Eq_0423_002}
	D_f(x)|_{E_{11}}={M}\big((-\alpha)^{\frac{p^k-1}{2}}\alpha^{(p^k-1)t}\big)
\end{align}
where $t\in \mathcal{T}_2\cup \{0\}$ and $x=-\alpha(\alpha^t+\alpha^{-(t+1)})^2/4$. 

Assume that $D_f(x_1)=D_f(x_2)$ for two distinct elements $x_1$ and $x_2$ in $E_{11}$. Let $t_1=\theta^{-1}(x_1)$ and $t_2=\theta^{-1}(x_2)$. Then, from (\ref{Eq_0423_002}), $t_1$ and $t_2$ should satisfy
\begin{align}\label{Eq_0423_003}
	t_1+t_2+1\equiv 0\bmod v
\end{align}
or
\begin{align}\label{Eq_0423_0031}
	t_1\equiv t_2\bmod v
\end{align}
where $v=({p^n-1})/({p^e-1})$. 

Note that $\mathcal{T}_2\cup \{0\}\cong \mathbb{Z}_{\frac{p^n-1}{4}}$. Let $R_i$, $0\leq i\leq v-1$, be the equivalent class congruent to $i$ modulo $v$ in $\mathbb{Z}_{\frac{p^n-1}{4}}$. 

From (\ref{Eq_0423_003}) and (\ref{Eq_0423_0031}), we know that all the elements $t$ in $R_i\cup R_{v-i-1}$ map to a single value in $\mathcal{I}_{11}$. Thus, obtaining $|\mathcal{U}_{11}(b)|$ is just finding out the corresponding $|R_i\cup R_{v-i-1}|$. When $v$ is odd, i.e., $n/e$ is odd, and $i=(v-1)/2$, $R_i$ coincides with $R_{v-i-1}$. In this case, we can easily check that any $t$ in $R_{({v-1})/{2}}$ maps to $1$ for even $k/e$ and $-1$ for odd $k/e$. Otherwise, $|\mathcal{U}_{11}(b)|=(p^e-1)/2$, since $|R_i|=(p^e-1)/4$. 
\hfill $\Box$
\end{proof}

\begin{lemma}
\label{E01_E10}
For $\mathcal{I}_{10}$, $\mathcal{I}_{01}$, $D_f|_{E_{10}}$, and $D_f|_{E_{01}}$, we have

\noindent 1) For an odd $k/e$;
\begin{itemize}
	\item[] $D_f$ is bijective on both $E_{10}$ and $E_{01}$ so that $|\mathcal{I}_{10}|=|E_{10}|=(1,0)$ and $|\mathcal{I}_{01}|=|E_{01}|=(0,1)$.
	\item[] $1\not\in \mathcal{I}_{10}$ and $1\not\in \mathcal{I}_{01}$. 
\end{itemize}

\noindent 2) For an even $k/e$;
\begin{itemize}
	\item[] $|\mathcal{I}_{10}|=|\mathcal{I}_{01}|=(p^n+p^e+2)/(2(p^e+1))$
	\item[] $|\mathcal{U}_{10}(b)|=\begin{cases}\frac{p^e-1}{4},&~{\rm for~}b=1,~p^n\equiv 1\bmod 4 \\ \frac{p^e-3}{4},&~{\rm for~}b=1,~p^n\equiv 3\bmod 4 \\ \frac{p^e+1}{2},&~{\rm for~}b(\neq 1)\in \mathcal{I}_{10}\end{cases}$
	\item[] $|\mathcal{U}_{01}(b)|=\begin{cases}\frac{p^e-1}{4},&~{\rm for~}b=1,~p^n\equiv 1\bmod 4 \\ \frac{p^e+1}{4},&~{\rm for~}b=1,~p^n\equiv 3\bmod 4 \\ \frac{p^e+1}{2},&~{\rm for~}b(\neq 1)\in \mathcal{I}_{01}.\end{cases}$
\end{itemize}
\end{lemma}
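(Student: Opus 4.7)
The plan mirrors the proofs of Lemmas \ref{Lem_MapProperty_00} and \ref{Lem_MapProperty_11}: reduce $D_f$ on each of $E_{10}$ and $E_{01}$ to the form $M(\delta^{c(p^k-1)})$ with $M(y)=(y+y^{-1})/2$, translate collisions into congruences, and count using Lemma \ref{Lem_Remainder}. For $x\in E_{10}$, Lemma \ref{Lem_Cyc} writes $x+1=u^2$ and $x=\gamma v^2$ with $u\pm\lambda v=\delta^{\pm 2t}$, so $D_f(x)=u^{p^k+1}-(\lambda v)^{p^k+1}$; expanding the two $(p^k+1)$-th powers and using $\delta^{2t}\cdot\delta^{-2t}=1$ collapses this to $M(\delta^{2t(p^k-1)})$. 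An analogous calculation on $E_{01}$ produces $D_f(x)|_{E_{01}}=M(\delta^{(2t+1)(p^k-1)})$. Since $\delta^2$ has order $p^n+1$, the collision relations become $t_1\equiv\pm t_2\bmod w$ on $E_{10}$ and $t_1\equiv t_2$ or $t_1+t_2\equiv -1\bmod w$ on $E_{01}$, where $w=(p^n+1)/\gcd(p^n+1,p^k-1)$. Applying Lemma \ref{Lem_gcd} with $a=n,b=k$ evaluates this $\gcd$ as $2$ when $k/e$ is odd and as $p^e+1$ when $k/e$ is even; the latter case forces $n/e$ to be odd (else $\gcd(n,k)\ge 2e$) and makes $w$ odd.

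In Case 1 (odd $k/e$) we have $w=(p^n+1)/2$, and the ranges of $t$ prescribed by Lemma \ref{Lem_Cyc} all lie within $[0,w/2)$. Hence $t_1\equiv\pm t_2\bmod w$ forces $t_1=t_2$ and $t_1+t_2+1\not\equiv 0\bmod w$, proving that $D_f$ is injective on both $E_{10}$ and $E_{01}$, so $|\mathcal{I}_{ij}|=|E_{ij}|$ as claimed. To exclude $1$ from the images I use $M(y)=1\iff y=1$: on $E_{10}$ this needs $w\mid t$, which is impossible in $[1,w/2)$; on $E_{01}$ a short $2$-adic valuation argument, exploiting $\min(v_2(p^n+1),v_2(p^k-1))=1$ from $\gcd=2$, shows that $\delta^{(2t+1)(p^k-1)}=1$ either has no solution at all or has smallest admissible $t$ equal to $(p^n-1)/4$, which lies one unit above the maximum of the $E_{01}$ range in the only sub-case ($p^n\equiv 1\bmod 4$) where such a $t$ exists.

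In Case 2 (even $k/e$), $w=(p^n+1)/(p^e+1)$ is odd, so the involution $\mu\mapsto-\mu\bmod w$ on $E_{10}$ has the unique fixed point $\mu=0$ and $\mu\mapsto-1-\mu\bmod w$ on $E_{01}$ has the unique fixed point $\mu=(w-1)/2$; in both cases there are $(w+1)/2=(p^n+p^e+2)/(2(p^e+1))$ orbits, giving $|\mathcal{I}_{10}|=|\mathcal{I}_{01}|$. Writing the upper endpoint $N$ of the $t$-range as $N=qw+r$---with $(q,r)=((p^e-3)/4,\,w-1)$ when $p^n\equiv 3\bmod 4$ and $(q,r)=((p^e-1)/4,\,(w-1)/2)$ when $p^n\equiv 1\bmod 4$---and feeding this into Lemma \ref{Lem_Remainder} yields pair-orbit size $(p^e+1)/2$ and the advertised singleton sizes $(p^e-3)/4,(p^e-1)/4$ for $E_{10}$ and $(p^e+1)/4,(p^e-1)/4$ for $E_{01}$; the one-unit shift in the $p^n\equiv 3\bmod 4$ singleton between $E_{10}$ and $E_{01}$ reflects that $t=0$ lies in the $E_{01}$ range but not in the $E_{10}$ range. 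Finally, the singleton orbit corresponds to $b=1$: on $E_{01}$ at $t=(w-1)/2$ we have $2t+1=w$, so $\delta^{w(p^k-1)}=(\delta^{p^n+1})^{(p^k-1)/(p^e+1)}=(-1)^{(p^k-1)/(p^e+1)}=1$, since $(p^k-1)/(p^e+1)=(p^e-1)(p^k-1)/(p^{2e}-1)$ is even. The main obstacle is this modular bookkeeping --- decomposing $N=qw+r$ correctly for each residue of $p^n\bmod 4$ and verifying the parity of $(p^k-1)/(p^e+1)$ that anchors $b=1$ on $E_{01}$.
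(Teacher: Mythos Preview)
Your proposal follows essentially the same route as the paper: express $D_f|_{E_{10}}$ and $D_f|_{E_{01}}$ via Lemma~\ref{Lem_Cyc} as $M(\delta^{2t(p^k-1)})$ and $M(\delta^{(2t+1)(p^k-1)})$, turn collisions into congruences modulo $L=(p^n+1)/\gcd(p^n+1,p^k-1)$, evaluate that $\gcd$ with Lemma~\ref{Lem_gcd}, and then count with Lemma~\ref{Lem_Remainder} after writing $N=qL+r$ exactly as you do. In fact you supply slightly more detail than the paper in two places --- the $2$-adic argument excluding $1$ from $\mathcal{I}_{01}$ when $k/e$ is odd, and the parity verification that $(p^k-1)/(p^e+1)$ is even so that the singleton orbit on $E_{01}$ really gives $b=1$ --- both of which the paper leaves implicit.
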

\begin{proof} 
~~~\\
\noindent {\bf Case 1)} For $I_{10}$ and $D_f|_{E_{10}}$;
\vspace{0.1cm}

From Lemma \ref{Lem_Cyc}, $D_f(x)|_{E_{10}}$ can be written as
\begin{align}\label{0520_001}
	D_f(x)|_{E_{10}}=M\big(\beta^{(p^k-1)(p^n-1)t}\big)=M\big(\delta^{2(p^k-1)t}\big)
\end{align}
where $x=(\delta^{2t}-\delta^{-2t})^2/4$ and $t$ varies over $[1,(p^n-3)/4]$ for $p^n\equiv 3\bmod 4$ and over $[1,(p^n-1)/4]$ for $p^n\equiv 1\bmod 4$. 

Let $t=\theta^{-1}(x)$. Then from (\ref{0520_001}), $\theta(t_1)$ and $\theta(t_2)$ give the same value of $D_f(x)$ if and only if
\begin{align}\label{0520_002}
	t_1\pm t_2\equiv 0 \bmod L
\end{align}
where $L=(p^n+1)/\gcd(p^k-1,p^n+1)$. From Lemma \ref{Lem_gcd}, $L=(p^n+1)/2$ for odd $k/e$ and $L=(p^n+1)/(p^e+1)$ for even $k/e$. 

Now, consider the case when $p^n\equiv 3\bmod 4$ and odd $k/e$. Since $\mathcal{T}_1=[1,(p^n-3)/4]$, no $t_1$ and $t_2$ in $\mathcal{T}_1$ satisfy (\ref{0520_002}) so that $D_f$ is bijective on $E_{10}$. Note that there exists no $t\in \mathcal{T}_1$ such that $t\bmod L\equiv 0$, that is, $D_f(x)\neq 1$. Hence we can conclude that $|\mathcal{I}_{10}|=|E_{10}|=(p^n-3)/4$ and $1\not\in\mathcal{I}_{10}$. 

For the case when $p^n\equiv 3\bmod 4$ and even $k/e$, we can use Lemma \ref{Lem_Remainder} by setting $v=L=(p^n+1)/(p^e+1)$. In this case, $q$ and $r$ become $q=(p^e-3)/4$ and $r=v-1$. Note that $v$ is odd in this case. From Lemma \ref{Lem_Remainder}, it is derived that $|\mathcal{U}_{10}(b)|=(p^e+1)/2$ for $b(\neq 1)\in \mathcal{I}_{10}$  and $|\mathcal{U}_{10}(1)|=(p^e-3)/4$. For the case when $p^n\equiv 1\bmod 4$, the proof can be done similarly.

\vspace{0.1cm}
\noindent {\bf Case 2)} For $I_{01}$ and $D_f|_{E_{01}}$;
\vspace{0.1cm}
  
  In this case, $D_f(x)|_{E_{01}}$ can be written as
  \begin{align*}
  	D_f(x)|_{E_{01}}=M\big(\delta^{(2t+1)(p^k-1)}\big)
  \end{align*}
  where $x=(\delta^{2t+1}-\delta^{-(2t+1)})^2/4$ and $t$ varies over $[0,(p^n-3)/4]$ for $p^n\equiv 3\bmod 4$ and over $[0,(p^n-5)/4]$ for $p^n\equiv 1\bmod 4$. 
  
  Using the similar argument to the previous case, $\theta(t_1)$ and $\theta(t_2)$ give the same value of $D_f(x)$ if and only if
\begin{align}\label{0520_003}
	(2t_1+1)(p^k-1)\pm (2t_2+1)(p^k-1) \equiv 0 \bmod 2(p^n+1). 
\end{align}  
Then (\ref{0520_003}) can be rewritten as either
\begin{align}\label{0520_004}
	t_1-t_2\equiv 0 \bmod L
\end{align}
or
\begin{align}\label{0520_005}
	t_1+t_2+1\equiv 0 \bmod L.
\end{align}
For the case when $p^n\equiv 3\bmod 4$ and odd $k/e$, again $D_f$ is bijective on $E_{01}$ and there exists no $x$ such that $D_f(x)=1$. Thus, $|\mathcal{I}_{01}|=|E_{01}|=(p^n+1)/4$ and $1\not\in \mathcal{I}_{01}$. 

For the case when $p^n\equiv 3\bmod 4$ and even $k/e$, applying Lemma \ref{Lem_Remainder} to (\ref{0520_004}) and (\ref{0520_005}) yields that $|\mathcal{U}_{01}(b)|=(p^e+1)/2$ for $b(\neq 1)\in \mathcal{I}_{01}$  and $|\mathcal{U}_{01}(1)|=(p^e+1)/4$. For the case when $p^n\equiv 1\bmod 4$, the proof can be done similarly. 

\hfill $\Box$
\end{proof}

So far, we have investigated the cardinality of the images and the inverse images of $D_f|_{E_{ij}}$, $i,j\in \{0,1\}$. In order to unify Lemmas \ref{Lem_MapProperty_00}--\ref{E01_E10} and see the overall mapping property of $D_f$, we have to look into the relationship between  $\mathcal{I}_{00}$, $\mathcal{I}_{11}$, $\mathcal{I}_{10}$, and $\mathcal{I}_{01}$ as in the following three lemmas. 

\begin{lemma}
\label{Relation_00_11}
For $\mathcal{I}_{00}$ and $\mathcal{I}_{11}$, we have
  
\begin{align*}
	\begin{cases}
		\mathcal{I}_{00}=\mathcal{I}_{11},&~{\rm for~even~}\frac{k}{e}\\
		\mathcal{I}_{00}\cap \mathcal{I}_{11}=\emptyset,&~{\rm for~odd~}\frac{k}{e}.
	\end{cases}
\end{align*}
\end{lemma}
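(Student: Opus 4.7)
The plan is to identify both images as $M$ applied to a coset of the multiplicative subgroup $H = \langle \alpha^{p^k-1}\rangle = \langle \alpha^{p^e-1}\rangle$ of $\mathbb{F}_{p^n}^*$, whose order is $|H| = (p^n-1)/(p^e-1)$, and then reduce the question to whether a single coset representative lies in $H$. From the parametrization in Lemma \ref{Lem_Cyc} and (\ref{Eq_Dfx_C00}) one has $\mathcal{I}_{00} = M(H)$ (consistent with the cardinality in Lemma \ref{Lem_MapProperty_00}); from (\ref{Eq_0423_001}) and (\ref{Eq_0423_002}), together with the identity $-M(y) = M(-y)$ used in the $p^n \equiv 3 \bmod 4$ case, one obtains $\mathcal{I}_{11} = M(cH)$ where
\begin{align*}
c = \gamma^{(p^k-1)/2}, \qquad \gamma = \begin{cases} -1, & p^n \equiv 3 \bmod 4, \\ -\alpha, & p^n \equiv 1 \bmod 4. \end{cases}
\end{align*}

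The key step is to decide when $c \in H$. Since $H$ is the unique subgroup of $\mathbb{F}_{p^n}^*$ of order $|H|$, $c \in H$ if and only if $c^{|H|} = 1$. A direct computation gives
\begin{align*}
c^{|H|} = \gamma^{(p^k-1)|H|/2} = \gamma^{m(p^n-1)/2},
\end{align*}
with $m = (p^k-1)/(p^e-1) = 1 + p^e + \cdots + p^{(k/e-1)e}$. Because $p$ is odd, $m$ has the same parity as $k/e$. For $p^n \equiv 3 \bmod 4$, $(p^n-1)/2$ is odd and $\gamma = -1$, so $c^{|H|} = (-1)^m$. For $p^n \equiv 1 \bmod 4$, $(p^n-1)/2$ is even, hence $(-1)^{m(p^n-1)/2} = 1$ while $\alpha^{m(p^n-1)/2} = (\alpha^{(p^n-1)/2})^m = (-1)^m$, giving again $c^{|H|} = (-1)^m$. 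In both cases, $c \in H$ iff $k/e$ is even.

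To finish, if $k/e$ is even then $cH = H$, whence $\mathcal{I}_{11} = M(H) = \mathcal{I}_{00}$. If $k/e$ is odd then $H$ and $cH$ are distinct, hence disjoint, cosets in $\mathbb{F}_{p^n}^*/H$. Since $M(a) = M(b)$ forces $b \in \{a, a^{-1}\}$, an equality $M(h_1) = M(ch_2)$ for $h_1, h_2 \in H$ would give either $c = h_1 h_2^{-1} \in H$ or $c = (h_1 h_2)^{-1} \in H$, both contradicting $c \notin H$. Therefore $\mathcal{I}_{00} \cap \mathcal{I}_{11} = \emptyset$. The main obstacle is the exponent bookkeeping for $c^{|H|}$ in the $p^n \equiv 1 \bmod 4$ subcase, where the sign factor and the $\alpha$-contribution from $\gamma = -\alpha$ must be combined correctly to recover the uniform criterion $c^{|H|} = (-1)^m$.
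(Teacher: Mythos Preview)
Your proof is correct and takes a cleaner, more structural route than the paper. The paper argues case by case: for $p^n \equiv 3 \bmod 4$ it uses the relation $\mathcal{I}_{11} = (-1)^{(p^k-1)/2}\,\mathcal{I}_{00}$ directly and then shows $\mathcal{I}_{00} \cap (-\mathcal{I}_{00}) = \emptyset$ via a square/nonsquare parity argument on $\alpha^{(p^k-1)t}$; for $p^n \equiv 1 \bmod 4$ it writes out $D_f(x_1) = D_f(x_2)$ in the $t$-coordinates, reduces to a divisibility condition on $2(t_2 \pm t_1)+1$, and checks by hand in the various subcases whether it can be met in the allowed $t$-ranges. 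You instead recognize that $\mathcal{I}_{00} = M(H)$ and $\mathcal{I}_{11} \subseteq M(cH)$ for the subgroup $H = \langle \alpha^{p^e-1}\rangle$ and the single coset representative $c = \gamma^{(p^k-1)/2}$, and reduce the whole question to whether $c \in H$, which you settle uniformly by the computation $c^{|H|} = (-1)^m$ with $m \equiv k/e \bmod 2$. This buys you a unified treatment of all four $(p^n \bmod 4,\, p^k \bmod 4)$ subcases the paper handles separately, and the disjointness step becomes a two-line coset argument (using that $M(a)=M(b)$ forces $b\in\{a,a^{-1}\}$) rather than a range check. One small point worth making explicit: the full equality $\mathcal{I}_{11} = M(cH)$, rather than just the containment, is what you use for the conclusion $\mathcal{I}_{00}=\mathcal{I}_{11}$ when $k/e$ is even; it follows from $\mathcal{I}_{11}\subseteq M(cH)=M(H)=\mathcal{I}_{00}$ together with $|\mathcal{I}_{11}| = |\mathcal{I}_{00}|$ from Lemmas~\ref{Lem_MapProperty_00} and~\ref{Lem_MapProperty_11} (noting that $k/e$ even forces $n/e$ odd, since $\gcd(k/e,n/e)=1$).
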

\begin{proof}
~~~\\ 
 \noindent {\bf Case 1)} For $p^n \equiv 3\bmod 4$;
 \vspace{0.1cm}
 
 In this case, $k/e$ is even when $p^k\equiv 1\bmod 4$ and $k/e$ is odd when $p^k\equiv 3\bmod 4$. In Lemma \ref{Lem_MapProperty_11}, we already showed that $\mathcal{I}_{00}=\mathcal{I}_{11}$ for $p^k\equiv 1\bmod 4$ and $\mathcal{I}_{00}=-\mathcal{I}_{11}$ for $p^k\equiv 3\bmod 4$. Thus, the remaining part is to show that any two elements $a$ and $-a$ cannot belong to $\mathcal{I}_{00}$. Assume that there are two distinct elements $x_1$ and $x_2$ in $E_{00}$ such that $D_f(x_1)=-D_f(x_2)$. Let $t_1=\theta^{-1}(x_1)$ and $t_2=\theta^{-1}(x_2)$. Then from (\ref{Eq_Dfx_C00}), it is easy to see that either $\alpha^{(p^k-1)t_1}=-\alpha^{(p^k-1)t_2}$ or $\alpha^{(p^k-1)t_1}=-\alpha^{-(p^k-1)t_2}$ must hold. But this is a contradiction because $-\alpha^{\pm (p^k-1)t_2}$ is a nonsquare in $\mathbb{F}_{p^n}$, whereas $\alpha^{(p^k-1)t_1}$ is a square in $\mathbb{F}_{p^n}$. Therefore, $\mathcal{I}_{00}\cap \mathcal{I}_{11}=\mathcal{I}_{00}
 \cap (-\mathcal{I}_{00})=\emptyset$ for odd $k/e$.  
 
 \vspace{0.1cm}
 \noindent {\bf Case 2)} For $p^n \equiv 1\bmod 4$; 
 \vspace{0.1cm}

 Again, assume that there exist $x_1\in E_{00}$ and $x_2\in E_{11}$ such that $D_f(x_1)=D_f(x_2)$. Let $t_1=\theta^{-1}(x_1)\in \mathcal{T}_2$ and $t_2=\theta^{-1}(x_2)\in \mathcal{T}_2\cup \{0\}$. Then, from (\ref{Eq_Dfx_C00}) and (\ref{Eq_0423_002}), we have
 \begin{align}\label{0423_1}
 	(-\alpha)^{\frac{p^k-1}{2}}\alpha^{(p^k-1)t_2}=\alpha^{(p^k-1)t_1}{~\rm or~}\alpha^{-(p^k-1)t_1}.
 \end{align}
 For $p^k\equiv 3\bmod 4$, (\ref{0423_1}) cannot be satisfied because the left-hand side of (\ref{0423_1}) is a nonsquare in $\mathbb{F}_{p^n}$, while the right-hand side of (\ref{0423_1}) is a square in $\mathbb{F}_{p^n}$. 
 
 For $p^k\equiv 1\bmod 4$, (\ref{0423_1}) implies that either $\alpha^{\frac{p^k-1}{2}(2t_1+2t_2+1)}=1$ or $\alpha^{\frac{p^k-1}{2}(2t_2-2t_1+1)}=1$, which further implies that either $2(t_1+t_2)+1$ or $2(t_2-t_1)+1$ must be divisible by $2(p^n-1)/(p^e-1)$ for odd $k/e$ and $(p^n-1)/(p^e-1)$ for even $k/e$.

 Since $2(t_2\pm t_1)+1$ is odd, we can easily see that the above is possible only when $k/e$ is even and $n/e$ is odd and that such $t_1$ and $t_2$ can be always found in $\mathcal{T}_2$ and $\mathcal{T}_2\cup \{0\}$, respectively. Note that $n/e$ is always odd when $k/e$ is even. Hence we conclude that $\mathcal{I}_{00}=\mathcal{I}_{11}$ for even $k/e$ and $\mathcal{I}_{00}\cap \mathcal{I}_{11}=\emptyset$, otherwise.   \hfill $\Box$
\end{proof}

\begin{lemma}
\label{Relation_01_10}
For $\mathcal{I}_{10}$ and $\mathcal{I}_{01}$, we have
\begin{align*}
	\begin{cases}
		\mathcal{I}_{10}\cap \mathcal{I}_{01}=\emptyset,&~{\rm for~odd~}\frac{k}{e}~{\rm and~}p^e\equiv 3\bmod 4\\
		\mathcal{I}_{10}= \mathcal{I}_{01},&~{\rm otherwise}.
	\end{cases}
\end{align*}
\end{lemma}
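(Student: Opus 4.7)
The plan is to parameterize the two image sets using Lemma \ref{Lem_Cyc}. Setting $\eta=\delta^{p^k-1}$ and $M(y)=(y+y^{-1})/2$, the proof of Lemma \ref{E01_E10} shows $\mathcal{I}_{10}=\{M(\eta^{2t})\}$ and $\mathcal{I}_{01}=\{M(\eta^{2t+1})\}$ over the respective ranges $T_{10},T_{01}$ of $t$. Since $M(y_1)=M(y_2)$ if and only if $y_1=y_2^{\pm 1}$, a nonempty intersection of $\mathcal{I}_{10}$ and $\mathcal{I}_{01}$ would require $2t_1\equiv\pm(2t_2+1)\pmod{\mathrm{ord}(\eta)}$ for some $t_1\in T_{10},\,t_2\in T_{01}$. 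An even integer is congruent to an odd integer modulo $m$ if and only if $m$ is odd, so the whole statement reduces to pinning down the parity of $\mathrm{ord}(\eta)$.

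I would compute $\mathrm{ord}(\eta)=2(p^n+1)/\gcd(2(p^n+1),p^k-1)$ using $\delta^{p^n+1}=\beta^{(p^{2n}-1)/2}=-1$, and then carry out a 2-adic analysis: the quotient is odd precisely when $v_2(p^k-1)\geq 1+v_2(p^n+1)$. Applying Lemma \ref{Lem_gcd} together with the standard LTE identities gives three subcases. (a) If $k/e$ is odd and $p^e\equiv 3\bmod 4$, then $v_2(p^k-1)=1$ while $v_2(p^n+1)\geq 1$, so $\mathrm{ord}(\eta)=p^n+1$ is even. (b) If $k/e$ is odd and $p^e\equiv 1\bmod 4$, then $v_2(p^e-1)\geq 2$ and $v_2(p^n+1)=1$, so $\mathrm{ord}(\eta)=(p^n+1)/2$ is odd. (c) If $k/e$ is even, coprimality of $n/e$ and $k/e$ forces $n/e$ odd, and the LTE formula together with $v_2(p^n+1)=v_2(p^e+1)$ (from the factorization of $p^n+1$ using odd $n/e$) gives $\mathrm{ord}(\eta)=(p^n+1)/(p^e+1)$, again odd.

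Case (a) yields the disjointness assertion immediately from the parity obstruction. For cases (b) and (c) I would upgrade mere overlap to full equality via a cardinality argument: both $\mathcal{I}_{10}$ and $\mathcal{I}_{01}$ are contained in $M(\langle\eta\rangle)$, of size $(\mathrm{ord}(\eta)+1)/2$. In case (c) the identity $(\mathrm{ord}(\eta)+1)/2=(p^n+p^e+2)/(2(p^e+1))$ matches $|\mathcal{I}_{10}|=|\mathcal{I}_{01}|$ from Lemma \ref{E01_E10}, so both sets equal $M(\langle\eta\rangle)$. In case (b) both sets have common size $(p^n-1)/4=|M(\langle\eta\rangle)|-1$, and the single missing element in each is $1=M(\eta^0)$ (a direct range check on $T_{10},T_{01}$ rules out $2t\equiv 0$ and $2t+1\equiv 0$ modulo $\mathrm{ord}(\eta)$), so both equal $M(\langle\eta\rangle)\setminus\{1\}$. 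The main obstacle is the 2-adic bookkeeping in step 2, since $v_2(p^k-1)$ and $v_2(p^n+1)$ both branch into several LTE subcases depending on $p\bmod 4$ and the parities of $k$, $n$, and $e$; once the parity of $\mathrm{ord}(\eta)$ is nailed down, the conclusions fall out from parity and a cardinality match with Lemma \ref{E01_E10}.
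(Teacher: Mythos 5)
Your proposal is correct and follows essentially the same route as the paper: the paper likewise reduces a common value of $D_f$ on $E_{10}$ and $E_{01}$ to the congruence $2(t_2\pm t_1)+1\equiv 0 \bmod L$ with $L=2(p^n+1)/\gcd(2(p^n+1),p^k-1)$ (your $\mathrm{ord}(\eta)$), rules it out by parity exactly when $L$ is even, i.e., when $k/e$ is odd and $p^e\equiv 3\bmod 4$, and otherwise concludes $\mathcal{I}_{10}=\mathcal{I}_{01}$ from the matching cardinalities in Lemma \ref{E01_E10}. Your only departures are cosmetic: you pin down $L$ by explicit $2$-adic valuations where the paper invokes its gcd lemma and states $L\in\{(p^n+1)/2,\,(p^n+1)/(p^e+1)\}$, and you get equality by counting that both images saturate $M(\langle\eta\rangle)$ (or $M(\langle\eta\rangle)\setminus\{1\}$) rather than by exhibiting, for each $t_1$, a solution $t_2$ of the congruence.
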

\begin{proof}
Assume that there exist $x_1\in E_{10}$ and $x_2\in E_{01}$ such that $D_f(x_1)=D_f(x_2)$. Let 
$t_1=\theta^{-1}(x_1)$ and $t_2=\theta^{-1}(x_2)$. Then, from Lemma \ref{Lem_Cyc}, we have
\begin{align}\label{Eq_0413_1}
	\delta^{2t_1(p^k-1)}+\delta^{-2t_1(p^k-1)}=\delta^{(2t_2+1)(p^k-1)}+\delta^{-(2t_2+1)(p^k-1)}.
\end{align}
Since $\delta^{2(p^n+1)}=1$, the necessary and sufficient conditions for (\ref{Eq_0413_1}) to hold is
\begin{align}\label{0521_001}
	2(t_2\pm t_1)+1\equiv 0 \bmod L
\end{align} 
where $L=2(p^n+1)/\gcd(2(p^n+1),p^k-1)$.

Note that $t_1$ lies in $[1,(p^n-3)/4]$ for $p^n\equiv 3\bmod 4$ and in $[1,(p^n-1)/4]$ for $p^n\equiv 1\bmod 4$ and $t_2$ lies in $[0,(p^n-3)/4]$ for $p^n\equiv 3\bmod 4$ and in $[0,(p^n-5)/4]$ for $p^n\equiv 1\bmod 4$. 

When $L$ becomes even, which occurs only if $k/e$ is odd and $p^e\equiv 3\bmod 4$, (\ref{0521_001}) cannot be satisfied because the left-hand side of (\ref{0521_001}) is odd. Hence we conclude that $\mathcal{I}_{01}\cap \mathcal{I}_{10}=\emptyset$ in this case. 

Otherwise, it is not difficult to find $t_2$ satisfying (\ref{0521_001}) for each $t_1$ because  $L$ is either $(p^n+1)/2$ or $(p^n+1)/(p^e+1)$ which is odd. Since $|\mathcal{I}_{10}|=|\mathcal{I}_{01}|$ in Lemma \ref{E01_E10}, the proof is done.  \hfill $\Box$
\end{proof}

\begin{lemma}
\label{Lemm_Relationship}

Let $\mathcal{S}_1=\mathcal{I}_{00} \cup \mathcal{I}_{11}$ and $\mathcal{S}_2=\mathcal{I}_{01} \cup \mathcal{I}_{10}$. Then we have
\begin{align*}
	\mathcal{S}_1\cap \mathcal{S}_2=
	\begin{cases}
		\emptyset,&~{\rm for~odd~}\frac{k}{e}\\
		\{1\},&~{\rm for~even~}\frac{k}{e}.
	\end{cases}
\end{align*}
\begin{proof}
The proof is in Appendix. 
\end{proof}

\end{lemma}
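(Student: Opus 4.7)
The plan is to show $\mathcal{S}_1 \cap \mathcal{S}_2 \subseteq \{1,-1\}$ from the algebraic forms in Lemma \ref{Lem_Cyc}, and then determine which of $1$ and $-1$ actually belongs to the intersection. Write $M(y) := (y+y^{-1})/2$. By Lemma \ref{Lem_Cyc}, every element of $\mathcal{S}_1$ has the form $M(z)$ with $z \in \mathbb{F}_{p^n}^*$, while every element of $\mathcal{S}_2$ has the form $M(w)$ with $w$ a $(p^n+1)$-th root of unity in $\mathbb{F}_{p^{2n}}^*$ (since $\delta^2$ has order $p^n+1$ and $p^k-1$ is even). The identity $M(y_1)=M(y_2)$ is equivalent to $y_1 \in \{y_2, y_2^{-1}\}$, so any coincidence forces $w \in \mathbb{F}_{p^n}^*$; combined with $w^{p^n+1}=1$ this gives $w^{\gcd(p^n-1,p^n+1)}=w^2=1$, hence $w \in \{\pm 1\}$ and the common value equals $M(\pm 1) \in \{1,-1\}$.

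For the value $1$, Lemma \ref{Lem_MapProperty_00} shows $1 \in \mathcal{I}_{00} \subseteq \mathcal{S}_1$, while Lemma \ref{E01_E10} gives $1 \in \mathcal{S}_2$ exactly when $k/e$ is even. To rule out $-1$ when $k/e$ is even, note that $\gcd(k/e, n/e)=1$ forces $n/e$ odd, and $-1 \in \mathcal{I}_{00}$ would require $(p^e-1) \mid (p^n-1)/2$, i.e.\ $(p^n-1)/(p^e-1) = \sum_{j=0}^{n/e-1} p^{je}$ even; but this sum consists of an odd number of odd terms and is therefore odd. Hence $-1 \notin \mathcal{I}_{00}$, and the identity $\mathcal{I}_{11}=\mathcal{I}_{00}$ from Lemma \ref{Relation_00_11} yields $-1 \notin \mathcal{S}_1$.

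The remaining case, $-1 \notin \mathcal{S}_2$ for odd $k/e$, is the main obstacle. By Lemma \ref{Lem_gcd} we have $\gcd(p^k-1,p^n+1)=2$. The condition $\delta^{2(p^k-1)t}=-1$ reduces to $(p^k-1)t \equiv (p^n+1)/2 \pmod{p^n+1}$, solvable only when $p^n \equiv 3 \pmod 4$; in that case the unique solution modulo the symmetries in Lemma \ref{Lem_Cyc} is $t=(p^n+1)/4$, which is precisely the value excluded from $\mathcal{T}_1$ since it produces $x=-1 \notin E_{10}$. An analogous analysis of $\delta^{(2t+1)(p^k-1)}=-1$ shows that either there is no odd solution $2t+1$ (a $2$-adic parity mismatch when $p^n\equiv 1\pmod 4$ combined with $p^k\equiv 1\pmod 4$) or the unique representative is $t=(p^n-1)/4$, again yielding the excluded $x=-1$. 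Thus $-1 \notin \mathcal{I}_{10}\cup\mathcal{I}_{01}=\mathcal{S}_2$, completing the proof. The delicate point is the bookkeeping with the $2$-adic valuations of $p^k-1$ and $p^n+1$ needed to identify the unique preimage of $-1$ inside the fundamental domain with the boundary point corresponding to $x=-1$.
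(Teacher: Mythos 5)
Your proposal is correct, but it takes a genuinely different route from the paper's appendix proof. The paper argues by four separate pairwise comparisons ($\mathcal{I}_{00}$ vs.\ $\mathcal{I}_{10}$, $\mathcal{I}_{11}$ vs.\ $\mathcal{I}_{10}$, $\mathcal{I}_{00}$ vs.\ $\mathcal{I}_{01}$, $\mathcal{I}_{11}$ vs.\ $\mathcal{I}_{01}$), each handled by lifting $D_f(x_1)=D_f(x_2)$ to an exponent congruence modulo $p^{2n}-1$ in $\mathbb{F}_{p^{2n}}$ and testing solvability via gcd computations, with further subcases on $p^n\bmod 4$ and $p^k\bmod 4$. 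You instead exploit the structural fact that all four images consist of values of the $2$-to-$1$ map $M(y)=(y+y^{-1})/2$ with arguments drawn from the two subgroups $\mathbb{F}_{p^n}^*$ (for $\mathcal{S}_1$) and the order-$(p^n+1)$ subgroup of $\mathbb{F}_{p^{2n}}^*$ (for $\mathcal{S}_2$; here the evenness of $p^k-1$ is what puts the $E_{01}$ arguments in that subgroup, as you note); since $M(y_1)=M(y_2)$ forces $y_1\in\{y_2,y_2^{-1}\}$, any common value forces $w^{\gcd(p^n-1,\,p^n+1)}=w^2=1$, so $\mathcal{S}_1\cap\mathcal{S}_2\subseteq\{M(\pm1)\}=\{1,-1\}$ in one stroke. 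This collapses the paper's case grid into bookkeeping for $\pm 1$, which you carry out correctly: the parity of $(p^n-1)/(p^e-1)$ (odd, since $n/e$ is odd when $k/e$ is even) rules out $-1\in\mathcal{I}_{00}=\mathcal{I}_{11}$, and for odd $k/e$ the combination of $\gcd(p^k-1,p^n+1)=2$ with the excluded parameter values ($t=(p^n+1)/4$ for $E_{10}$, $t=(p^n-1)/4$ for $E_{01}$, both corresponding to $x=-1$) rules out $-1\in\mathcal{S}_2$; your approach buys brevity and a conceptual explanation of \emph{why} the intersection is so small, at the cost of exactly the delicate $\pm1$ analysis you flag. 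Two minor remarks: your parenthetical attributes the ``no odd solution $2t+1$'' branch only to $p^n\equiv 1$, $p^k\equiv 1\pmod 4$, but for odd $k/e$ with $p^n\equiv 3\pmod 4$ (which forces $p^k\equiv 3\pmod 4$) the unique residue for $2t+1$ modulo $p^n+1$ is even, so that branch occurs there too; your ``either/or'' dichotomy still covers it, but the case split should be written out. Also, your assertion that Lemma \ref{Lem_MapProperty_00} gives $1\in\mathcal{I}_{00}$ tacitly assumes $p^e>3$ (for $p^e=3$ the sets $\mathcal{U}_{00}(1)$ and $\mathcal{U}_{11}(1)$ are empty), but the paper's own appendix proof makes the identical tacit assumption, so this is not a defect relative to the paper.
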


Using the previous lemmas, the main theorem can be stated as follows. 

\begin{theorem}
{\label{Thm_1}
For an odd prime $p$ and $d=(p^k+1)/2$, the differential spectrum of the function $f(x)=x^d$ in $\mathbb{F}_{p^n}$ is given as:

\vspace{0.1cm}
\noindent 1) For an odd $k/e$;

\vspace{0.1cm}
1-i) For $p^e\equiv 3\bmod 4$;
\begin{align*}
	\omega_i=
	\begin{cases}
		2,& ~{\rm if~}i=\frac{p^e+1}{4}~({\rm the~corresponding~two~}b'{\rm s~are~}\pm 1)\\
		\frac{p^n-p^e}{p^e-1},&~{\rm if~}i=\frac{p^e-1}{2} \\
		\frac{p^n-1}{2},&~{\rm if~}i=1\\
		\frac{p^n-3}{2}-\frac{p^n-p^e}{p^e-1},&~{\rm if~}i=0\\
		0,&~{\rm otherwise}.
	\end{cases}
\end{align*}

1-ii) For $p^e\equiv 1\bmod 4$;
\begin{align*}
	\omega_i=
	\begin{cases}
		1,& ~{\rm if~}i=\frac{p^e+3}{4}~({\rm the~corresponding~}b{\rm ~is~} 1)\\
		1,& ~{\rm if~}i=\frac{p^e-1}{4}~({\rm the~corresponding~}b{\rm ~is~} -1)\\
		\frac{p^n-p^e}{p^e-1},& ~{\rm if~}i=\frac{p^e-1}{2}\\
		\frac{p^n-1}{4},& ~{\rm if~}i=2\\
		\frac{(p^n-1)(3p^e-7)}{4(p^e-1)},& ~{\rm if~}i=0\\
		0,&~{\rm otherwise}.
	\end{cases}
\end{align*}

\vspace{0.1cm}
2) For an even $k/e$;
\begin{align*}
	\omega_i=
	\begin{cases}
		1,& ~{\rm if~}i=p^e~({\rm the~corresponding~}b{\rm ~is~} 1)\\
		\frac{p^n-p^e}{2(p^e-1)},& ~{\rm if~}i={p^e-1}\\
		\frac{p^n-p^e}{2(p^e+1)},& ~{\rm if~}i={p^e+1}\\
		p^n-\frac{p^{n+e}-1}{p^{2e}-1},& ~{\rm if~}i=0\\
		0,&~{\rm otherwise}
	\end{cases}
\end{align*}
where $e=\gcd(n,k)$.
}
\end{theorem}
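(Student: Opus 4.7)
The plan is to read off the differential spectrum by partitioning the domain as $\mathbb{F}_{p^n}=\{0\}\cup\{-1\}\cup E_{00}\cup E_{01}\cup E_{10}\cup E_{11}$ and then, for each $b\in\mathbb{F}_{p^n}$, assembling $N(1,b)$ as the sum of the relevant $|\mathcal{U}_{ij}(b)|$ from Lemmas \ref{Lem_MapProperty_00}--\ref{E01_E10} together with indicator contributions from $x=0$ and $x=-1$. A direct check gives $D_f(0)=1$ always, while $D_f(-1)=-(-1)^{(p^k+1)/2}$ equals $1$ if $p^k\equiv 1\bmod 4$ and $-1$ if $p^k\equiv 3\bmod 4$. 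The remaining task is to decide, case by case, which $b$'s lie in which $\mathcal{I}_{ij}$'s using the overlap relations in Lemmas \ref{Relation_00_11}, \ref{Relation_01_10}, and \ref{Lemm_Relationship}.

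For odd $k/e$, Lemma \ref{Relation_00_11} gives $\mathcal{I}_{00}\cap\mathcal{I}_{11}=\emptyset$ and Lemma \ref{Lemm_Relationship} gives $\mathcal{S}_1\cap\mathcal{S}_2=\emptyset$, so the $\mathcal{I}_{ij}$'s contribute to (essentially) disjoint $b$'s; the two subcases 1-i and 1-ii reflect the dichotomy in Lemma \ref{Relation_01_10} governed by $p^e\bmod 4$. When $p^e\equiv 3\bmod 4$ the map $D_f$ is injective on $E_{10}\cup E_{01}$, yielding $\omega_1=(p^n-1)/2$; the special values $b=\pm 1$ absorb the $x=0$ and $x=-1$ contributions (note $p^k\equiv 3\bmod 4$, so $D_f(-1)=-1$) to give $\omega_{(p^e+1)/4}=2$, while the remaining $b\in\mathcal{S}_1$ account for $\omega_{(p^e-1)/2}=(p^n-p^e)/(p^e-1)$. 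The subcase $p^e\equiv 1\bmod 4$ runs in parallel, except $\mathcal{I}_{10}=\mathcal{I}_{01}$ doubles the count on that side to give $\omega_2=(p^n-1)/4$, and the asymmetry $D_f(-1)=1=D_f(0)$ splits the two special values into $\omega_{(p^e+3)/4}=\omega_{(p^e-1)/4}=1$.

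For even $k/e$, one first observes $\gcd(n/e,k/e)=1$ forces $n/e$ odd and $p^k\equiv 1\bmod 4$, so $D_f(-1)=1$. Lemmas \ref{Relation_00_11} and \ref{Relation_01_10} collapse the four image sets into $\mathcal{I}_{00}=\mathcal{I}_{11}$ and $\mathcal{I}_{10}=\mathcal{I}_{01}$, which meet only at $\{1\}$ by Lemma \ref{Lemm_Relationship}. At any $b\neq 1$ in $\mathcal{I}_{00}$ the two preimage sets stack to give $N(1,b)=p^e-1$, and similarly $N(1,b)=p^e+1$ on the $\mathcal{I}_{10}$ side; the value $b=1$ collects all four $\mathcal{U}_{ij}(1)$ plus the two fixed-point contributions, and a short check shows the sum equals $p^e$ uniformly for either residue of $p^n$ modulo $4$. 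The identity
\[
\frac{p^n-p^e}{2(p^e-1)}+\frac{p^n-p^e}{2(p^e+1)}=\frac{p^{n+e}-p^{2e}}{p^{2e}-1}
\]
then converts $\omega_0=p^n-1-\omega_{p^e-1}-\omega_{p^e+1}$ into the stated closed form.

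The main obstacle will be bookkeeping. The parity of $k/e$, the parity of $n/e$, and the residue of $p^e$ modulo $4$ interact to produce many subcases in Lemmas \ref{Lem_MapProperty_00}--\ref{E01_E10}, and one must carefully verify in each subcase that the contributions to $N(1,\pm 1)$ from $\mathcal{U}_{ij}(\pm 1)$ and from the fixed points $\{0,-1\}$ combine to hit the claimed multiplicities; the remaining algebraic identities reducing the sums to the stated formulas for $\omega_0$ are routine but must be checked.
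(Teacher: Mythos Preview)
Your proposal is correct and follows essentially the same approach as the paper: partition $\mathbb{F}_{p^n}$ into $\{0\},\{-1\}$ and the four $E_{ij}$'s, pull the preimage sizes $|\mathcal{U}_{ij}(b)|$ from Lemmas \ref{Lem_MapProperty_00}--\ref{E01_E10}, use Lemmas \ref{Relation_00_11}--\ref{Lemm_Relationship} to decide how the image sets overlap, and then add the stray contributions $D_f(0)=1$, $D_f(-1)=(-1)^{(p^k+3)/2}$. Your parity observations ($k/e$ even forces $n/e$ odd and $p^k\equiv 1\bmod 4$; $p^e\equiv 1\bmod 4$ forces $p^k\equiv 1\bmod 4$) and the explicit check that the four $|\mathcal{U}_{ij}(1)|$ plus the two fixed-point contributions sum to $p^e$ in the even-$k/e$ case are exactly the bookkeeping the paper carries out (somewhat more tersely) in its Cases~1 and~2.
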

\begin{proof} So far, we have derived $|\mathcal{I}_{ij}|$'s and $|\mathcal{U}_{ij}(b)|$'s in Lemmas \ref{Lem_MapProperty_00}--\ref{E01_E10}. From Lemmas \ref{Relation_00_11} and \ref{Relation_01_10}, we have seen that $\mathcal{I}_{00}$ and $\mathcal{I}_{11}$ are either disjoint or identical and so be $\mathcal{I}_{01}$ and $\mathcal{I}_{10}$. Finally, from Lemma \ref{Lemm_Relationship}, we have seen that $\mathcal{I}_{00}\cup \mathcal{I}_{11}$ and $\mathcal{I}_{01}\cup \mathcal{I}_{10}$ are either disjoint or almost disjoint. For the proof of this theorem, we have to combine these results.

\vspace{0.1cm}
	\noindent {\bf Case 1)} Combining $D_f|_{E_{00}}$ and $D_f|_{E_{11}}$;
	\vspace{0.1cm}	
	
	For the case when $k/e$ is odd, we have $|\mathcal{I}_{00}\cup \mathcal{I}_{11}|=(p^n+p^e-2)/(p^e-1)$ because $\mathcal{I}_{00}$ and $\mathcal{I}_{11}$ are disjoint. For any $b\in(\mathcal{I}_{00}\cup \mathcal{I}_{11})\setminus\{1,-1\}$, the cardinality of $\mathcal{U}_0(b)$, the inverse image in $E_{00}\cup E_{11}$ of $b$, is $(p^e-1)/2$. For the elements $\pm 1\in \mathcal{I}_{00}\cup \mathcal{I}_{11}$, we have
	\begin{align*}
		(|\mathcal{U}_0(1)|,|\mathcal{U}_0(-1)|)=
		\begin{cases}
			\big(\frac{p^e-5}{4},\frac{p^e-1}{4}\big),~{\rm for~}p^e\equiv 1\bmod 4\\
			\big(\frac{p^e-3}{4},\frac{p^e-3}{4}\big),~{\rm for~}p^e\equiv 3\bmod 4.
		\end{cases}
	\end{align*}
	
	For the case when $k/e$ is even, we have $|\mathcal{I}_{00}\cup \mathcal{I}_{11}|=(p^n+p^e-2)/(2(p^e-1))$ since $\mathcal{I}_{00}$ and $\mathcal{I}_{11}$ coincide. Also, we have
\begin{align*}
	|\mathcal{U}_0(b)|=
	\begin{cases}
		p^e-1,&~{\rm if~}b\in (\mathcal{I}_{00}\cup \mathcal{I}_{11})\setminus\{1\}\\
		\frac{p^e-3}{2},&~{\rm if~}b=1.
	\end{cases}
\end{align*}
	
	\noindent {\bf Case 2)} Combining $D_f|_{E_{10}}$ and $D_f|_{E_{01}}$;
	\vspace{0.1cm}

	For the case when $k/e$ is odd, we have $|\mathcal{I}_{10}\cup \mathcal{I}_{01}|=|\mathcal{I}_{10}|=|\mathcal{I}_{01}|=(p^n-1)/4$ for $p^e\equiv 1\bmod 4$ and $|\mathcal{I}_{10}\cup \mathcal{I}_{01}|=|\mathcal{I}_{10}|+|\mathcal{I}_{01}|=(p^n-1)/2$ for $p^e\equiv 3\bmod 4$. The cardinality of $\mathcal{U}_1(b)$, the inverse image in $E_{10}\cup E_{01}$ of $b\in (\mathcal{I}_{10}\cup \mathcal{I}_{01})$, is 
	\begin{align*}
		|\mathcal{U}_1(b)|=
		\begin{cases}
			2,&~{\rm for~}p^e\equiv 1\bmod 4\\
			1,&~{\rm for~}p^e\equiv 3\bmod 4.
		\end{cases}
	\end{align*}

	For the case when $k/e$ is even, we have $|\mathcal{I}_{10}\cup \mathcal{I}_{01}|=|\mathcal{I}_{10}|=|\mathcal{I}_{01}|=(p^n+p^e+2)/(2(p^e+1))$. Also, we have
	\begin{align*}
		|\mathcal{U}_1(b)|=
		\begin{cases}
			p^e+1,&~{\rm if~}b\in (\mathcal{I}_{10}\cup \mathcal{I}_{01})\setminus \{1\}\\\
			\frac{p^e-1}{2},&~{\rm if~}b=1.
		\end{cases}
	\end{align*}

	The unified mapping property of $D_f|_{E_{10}}$ and $D_f|_{E_{01}}$ is that the cardinality of the inverse image  in $E_{10}\cup E_{01}$ of each element in $(\mathcal{I}_{10}\cup \mathcal{I}_{01})\setminus \{1\}$ is $p^e+1$ and the cardinality of the inverse image  in $E_{10}\cup E_{01}$ of the element $1\in \mathcal{I}_{10}\cup \mathcal{I}_{01}$ is $(p^e-1)/2$. 
	
	Since $x=0,-1\not\in (E_{00}\cup E_{11}\cup E_{10}\cup E_{01})$, we have to consider the case when $x=0$ and $x=-1$. It is easy to derive that $D_f(0)=1$ and $D_f(-1)=(-1)^{(p^k+3)/2}$. Finally, with Lemma \ref{Lemm_Relationship}, we can combine the Case 1) and Case 2). Hence the proof is done. \hfill $\Box$
\end{proof}

\begin{corollary}
\label{Mod_Helleseth}
For an odd prime $p$ and $d=(p^k+1)/2$, the differential uniformity $\Delta_f$ of $f(x)=x^d$ in $\mathbb{F}_{p^n}$ is given as
\begin{align*}
	\Delta_f=
	\begin{cases}
		\frac{p^e-1}{2},&~{\rm for~odd~}\frac{k}{e}\\
		p^e+1,&~{\rm for~even~}\frac{k}{e}
	\end{cases}
\end{align*}
where $e=\gcd(n,k)$. \hfill $\Box$  
\end{corollary}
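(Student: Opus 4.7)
The plan is to derive $\Delta_f$ directly from the differential spectrum established in Theorem \ref{Thm_1}. By definition $\Delta_f = \max_{a \in \mathbb{F}_{p^n}^*,\, b} N(a, b)$, and the reduction $N(a, b) = N(1, b/a^d)$ recorded in the Introduction gives $\Delta_f = \max_b N(1, b)$. Since $\omega_i$ counts the $b$'s with $N(1, b) = i$, it is enough to read off the largest index $i$ with $\omega_i > 0$ in the spectrum of Theorem \ref{Thm_1}, and compare to see which is the overall maximum in each case.

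For odd $k/e$ with $p^e \equiv 3 \bmod 4$, Theorem \ref{Thm_1} lists nonzero entries only at $i \in \{0,\, 1,\, (p^e+1)/4,\, (p^e-1)/2\}$, and $(p^e-1)/2 \geq (p^e+1)/4$ reduces to $p^e \geq 3$, which always holds for an odd prime power $p^e$. In the subcase $p^e \equiv 1 \bmod 4$, the nonzero entries lie at $i \in \{0,\, 2,\, (p^e-1)/4,\, (p^e-1)/2,\, (p^e+3)/4\}$, and $(p^e-1)/2 \geq (p^e+3)/4$ reduces to $p^e \geq 5$, which is forced by $p^e \equiv 1 \bmod 4$. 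In both subcases, the largest index with $\omega_i > 0$ is therefore $(p^e-1)/2$. For even $k/e$, the nonzero entries sit at $i \in \{0,\, p^e-1,\, p^e,\, p^e+1\}$, and $p^e+1$ is trivially the maximum.

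The only point requiring care is to verify that the chosen maximal index actually has $\omega_i > 0$. For odd $k/e$ this requires $\omega_{(p^e-1)/2} = (p^n-p^e)/(p^e-1) > 0$, and for even $k/e$ it requires $\omega_{p^e+1} = (p^n-p^e)/(2(p^e+1)) > 0$; both hold precisely when $n > e$, which is the nondegenerate setting of interest (when $n = e$ the exponent $d$ reduces modulo $p^n - 1$ and the function collapses to a trivial one, which can be treated separately). Since Theorem \ref{Thm_1} provides the full spectrum, no substantive obstacle remains; the corollary is simply a matter of identifying the largest index among the listed nonzero entries.
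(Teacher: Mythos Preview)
Your proposal is correct and follows exactly the route the paper intends: the corollary is stated with no proof beyond a $\Box$, so it is meant to be read off directly from the spectrum in Theorem \ref{Thm_1}, and you carry this out in full. Your extra care in checking the inequalities among the listed indices and in flagging the degenerate case $n=e$ (where $\omega_{(p^e-1)/2}$ or $\omega_{p^e+1}$ vanishes) goes slightly beyond what the paper makes explicit, but the approach is the same.
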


The comparison with the existing bound in Theorem \ref{Thm_Helleseth} and our new result in Corollary \ref{Mod_Helleseth} is given in Table \ref{table1}. The bound in Theorem \ref{Thm_Helleseth} is not tight for some cases of $d=(p^k+1)/2$, whereas Theorem \ref{Thm_1}  provides the exact differential spectrum and $\Delta_f$ for $d=(p^k+1)/2$. We can also explain some known PN and APN functions which belong to this function class.

\begin{table*}[t]
\begin{center}
{\caption{Comparison between the existing bound in Theorem \ref{Thm_Helleseth} and new result in Corollary \ref{Mod_Helleseth}}\label{table1}
\centering
\begin{tabular}{|c|c|c|c|c|}
\hline
  $p$	&  $n$	&  $k$	& Upper bound on $\Delta_f$ in \cite{Helleseth}	&   Explicit $\Delta_f$ (new result) \\
\hline
\hline
  $5$	&  $3$	&  $2$	& $12$	&   $6$\\
\hline
  $5$	&  $5$	&  $2$	& $12$	&   $6$\\
\hline
  $5$	&  $5$	&  $4$	& $24$	&   $6$\\
\hline
  $7$	&  $3$	&  $2$	& $24$	&   $8$\\
\hline
 $7$	&  $5$	&  $2$	& $24$	&   $8$\\
\hline
 $7$	&  $5$	&  $4$	& $48$	&   $8$\\
\hline
 $7$	&  $7$	&  $2$	& $24$	&   $8$\\
\hline
 $7$	&  $7$	&  $4$	& $48$	&   $8$\\
\hline
 $7$	&  $7$	&  $6$	& $24$	&   $8$\\
\hline
 $11$	&  $3$	&  $2$	& $60$	&   $12$\\
\hline
\end{tabular}
}
\end{center}
\end{table*}

\section{The Differential Spectrum of $x^{\frac{p^n+1}{p^k+1}+\frac{p^n-1}{2}}$ in $\mathbb{F}_{p^n}$}
\label{Dif_Func2}

In this section, we consider the power function $f(x)=x^d$ with the power
\begin{align*}
	d=\frac{p^n+1}{p^k+1}+\frac{p^n-1}{2}
\end{align*}
where $n/k$ should be odd. Note that only when $p^k\equiv 3\bmod 4$, i.e., $p\equiv 3\bmod 4$ and $n$ is odd, there exists no inverse $d^{-1}=(p^k+1)/2$ which belongs to the function in the previous section. Hence,  for an odd prime $p$ such that $p\equiv 3\bmod 4$ and odd $n$ with $k|n$, we calculate the differential uniformity and the differential spectrum of the power function $f(x)=x^{\frac{p^n+1}{p^k+1}+\frac{p^n-1}{2}}$ in $\mathbb{F}_{p^n}$. 

Define the functions $h_i(x)$ in $\mathbb{F}_{p^n}$, $1\leq i\leq 4$, as
\begin{align*}
	h_1(x)=&(x+1)^{\frac{p^k+1}{2}}+x^{\frac{p^k+1}{2}}\\
	h_2(x)=&(x+1)^{\frac{p^k+1}{2}}-x^{\frac{p^k+1}{2}}\\
	h_3(x)=&-(x+1)^{\frac{p^k+1}{2}}+x^{\frac{p^k+1}{2}}\\
	h_4(x)=&-(x+1)^{\frac{p^k+1}{2}}-x^{\frac{p^k+1}{2}}.
\end{align*}
Let $\lambda_{i}(b)$ and $\chi_{i}(b)$ be the number of solutions of  
\begin{align}\label{Eq_Thm2_add1} 
h_i(x)=b^{-\frac{p^k+1}{2}}
\end{align}
in $E_{00}$ and $E_{11}$, respectively. 

\begin{lemma}\label{Lem_Map2}
For $f(x)=x^{\frac{p^n+1}{p^k+1}+\frac{p^n-1}{2}}$ and $b\in \mathbb{F}_{p^n}^*$, $N(1,b)$ is determined as:

\vspace{0.1cm}
\noindent 1) For $b\neq \pm 1$;
\begin{align*}
	N(1,b)=
	\begin{cases}
		\lambda_{1}(b)+\lambda_{2}(b)+\lambda_{3}(b)+\lambda_{4}(b),&~{\rm for~}b\in C_0\setminus\{1\}\\
		\chi_1(b)+\chi_2(b)+\chi_3(b)+\chi_4(b),&~{\rm for~}b\in C_1\setminus\{-1\}.
	\end{cases}
\end{align*}

\noindent 2) For $b=\pm 1$;
\begin{align*}
	N(1,b)=
	\begin{cases}
		\lambda_{1}(b)+\lambda_{2}(b)+\lambda_{3}(b)+\lambda_{4}(b)+1,&~{\rm for~}b=1\\
		\chi_1(b)+\chi_2(b)+\chi_3(b)+\chi_4(b)+1,&~{\rm for~}b=-1.
	\end{cases}
\end{align*}
\end{lemma}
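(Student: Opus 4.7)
The plan is to decompose the count $N(1,b)$ according to the cyclotomic class of $x$ and $x+1$, and to transform the differential equation $f(x+1)-f(x)=b$ on each piece $E_{ij}$ into one of the four equations $h_i(w)=b^{-(p^k+1)/2}$. The pivotal observation is that $f(x)=\chi(x)\cdot x^{(p^n+1)/(p^k+1)}$, where $\chi(x)=x^{(p^n-1)/2}$ is the quadratic character, and, under the hypotheses ($p\equiv 3\bmod 4$, $n$ odd, $k\mid n$), the integer $(p^k+1)/2$ is even while $(p^n+1)/(p^k+1)$ is odd. Hence $\chi(x)^{(p^k+1)/2}=1$, which yields the pivotal identity $f(x)^{(p^k+1)/2}=x^{(p^n+1)/2}=x\,\chi(x)$.

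First I would evaluate $f(x+1)-f(x)$ on each $E_{ij}$; the oddness of $(p^n+1)/(p^k+1)$ ensures $\chi(y^{(p^n+1)/(p^k+1)})=\chi(y)$, so the difference is a signed sum of $(x+1)^{(p^n+1)/(p^k+1)}$ and $x^{(p^n+1)/(p^k+1)}$ with signs dictated by $\chi(x),\chi(x+1)$. Letting $u=f(x)$ and $w=u/b$, I would raise the two relations $u=\chi(x)x^{(p^n+1)/(p^k+1)}$ and $u+b=\chi(x+1)(x+1)^{(p^n+1)/(p^k+1)}$ to the power $(p^k+1)/2$; using the identity above I obtain
\begin{align*}
x=\chi(x)\,b^{(p^k+1)/2}\,w^{(p^k+1)/2},\qquad x+1=\chi(x+1)\,b^{(p^k+1)/2}\,(w+1)^{(p^k+1)/2}.
\end{align*}
Subtracting these and cancelling $b^{(p^k+1)/2}$ produces one of $h_2,h_1,h_4,h_3$ evaluated at $w$ according to whether $x$ lies in $E_{00},E_{10},E_{01},E_{11}$, respectively. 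Tracking $\chi$ through $u\in C_0$ (which holds in all four cases because $-1\in C_1$ absorbs the sign coming from $\chi(x)$) yields $\chi(w)=\chi(b)=\chi(w+1)$, so that $w\in E_{00}$ when $b\in C_0$ and $w\in E_{11}$ when $b\in C_1$.

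Bijectivity of each correspondence $x\leftrightarrow w$ follows from the explicit inverse displayed above: indeed $\gcd(d,p^n-1)=2$, so the only failure of injectivity of $f$ is the identification $f(x)=f(-x)$, which swaps cyclotomic classes and hence cannot collapse two distinct elements of the same $E_{ij}$. Finally, the elements $x\in\{0,-1\}$ must be handled separately: direct evaluation gives $f(1)-f(0)=1$, and since $d=(p^n+1)/(p^k+1)+(p^n-1)/2$ is even (a sum of two odd integers), $f(0)-f(-1)=-1$, accounting for the extra $+1$ in $N(1,\pm 1)$. Combining the four $E_{ij}$-counts with these two boundary contributions yields both displayed formulas. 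The main obstacle will be the sign bookkeeping across the four parallel cases---in particular, tracking where each parity fact on $(p^k+1)/2$ and $(p^n+1)/(p^k+1)$ enters---but once the identity $f(x)^{(p^k+1)/2}=x\chi(x)$ is in hand each case reduces to a routine algebraic manipulation.
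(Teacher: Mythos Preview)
Your approach is correct and is essentially the same as the paper's: both reduce to the substitution $w=f(x)/b$ (the paper writes $y=b^{-1}\nu^{2}$ after parametrizing $x=\pm\nu^{p^k+1}$, which is the same element), and both obtain $h_i(w)=b^{-(p^k+1)/2}$ with $w\in E_{00}$ or $E_{11}$ according to $\chi(b)$. Your packaging via the quadratic character and the identity $f(x)^{(p^k+1)/2}=x\chi(x)$ unifies the four cases a bit more cleanly, but the underlying transformation, the case split over $E_{ij}$, and the treatment of $x\in\{0,-1\}$ are identical to the paper's proof.
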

\begin{proof}
Consider the cases when $x\in \mathbb{F}_{p^n}^*\setminus\{-1\}$. Since $\gcd(p^k+1,p^n-1)=2$, an element $x\in E_{00}$ can be expressed as $x=\nu^{p^k+1}$ and $x+1=\psi^{p^k+1}$ for some $\nu$ and $\psi$. If this $x$ is a solution to $D_f(x)=b$, then we have
\begin{align}\label{Eq_0515_00}
	(x+1)^{\frac{p^n+1}{p^k+1}+\frac{p^n-1}{2}}-x^{\frac{p^n+1}{p^k+1}+\frac{p^n-1}{2}}=\psi^2-\nu^2=b.
\end{align}

By setting  $y=b^{-1}\nu^2$, we have $y+1=b^{-1}\psi^2$ and thus $y$ becomes the solution to 
\begin{align}\label{Eq_0515_001}
	(y+1)^{\frac{p^k+1}{2}}-y^{\frac{p^k+1}{2}}=b^{-\frac{p^k+1}{2}}.
\end{align}
Since the transformation $x$ to $y$ is one-to-one, each solution $x\in E_{00}$ to $D_f(x)=b$  corresponds to either a solution $y\in E_{00}$ to (\ref{Eq_0515_001}) for $b\in C_0$ or a solution $y\in E_{11}$ to (\ref{Eq_0515_001}) for $b\in C_1$.

Similarly, if $x\in E_{11}$ is a solution to $D_f(x)=b$, then by letting $x+1=-\psi^{p^k+1}$ and $x=-\nu^{p^k+1}$, we have (\ref{Eq_0515_00}). Again by setting $y=b^{-1}\nu^2$, we have $y+1=b^{-1}\psi^2$. Thus $y$ is a solution to 
\begin{align}\label{Eq_0515_002}
	-(y+1)^{\frac{p^k+1}{2}}+y^{\frac{p^k+1}{2}}=b^{-\frac{p^k+1}{2}}.
\end{align}
Since the transformation $x$ to $y$ is one-to-one, each solution $x\in E_{11}$ to $D_f(x)=b$  corresponds to either a solution $y\in E_{00}$ to (\ref{Eq_0515_002}) for $b\in C_0$, or a solution $y\in E_{11}$ to (\ref{Eq_0515_002}) for $b\in C_1$.

Similarly, if $x\in E_{10}$ is a solution to $D_f(x)=b$, then by letting $x+1=\psi^{p^k+1}$ and $x=-\nu^{p^k+1}$, we have (\ref{Eq_0515_00}). Again by setting $y=b^{-1}\nu^2$, we have $y+1=b^{-1}\psi^2$. Thus $y$ is a solution to 
\begin{align}\label{Eq_0515_003}
	(y+1)^{\frac{p^k+1}{2}}+y^{\frac{p^k+1}{2}}=b^{-\frac{p^k+1}{2}}.
\end{align}
Since the transformation $x$ to $y$ is one-to-one, each solution $x\in E_{11}$ to $D_f(x)=b$  corresponds to either a solution $y\in E_{00}$ to (\ref{Eq_0515_003}) for $b\in C_0$ or a solution $y\in E_{11}$ to (\ref{Eq_0515_003}) for $b\in C_1$.

Similarly, if $x\in E_{01}$ is a solution to $D_f(x)=b$, then by letting $x+1=-\psi^{p^k+1}$ and $x=\nu^{p^k+1}$, we have (\ref{Eq_0515_00}). Again by setting $y=b^{-1}\nu^2$, we have $y+1=b^{-1}\psi^2$. Thus $y$ is a solution to 
\begin{align}\label{Eq_0515_004}
	-(y+1)^{\frac{p^k+1}{2}}-y^{\frac{p^k+1}{2}}=b^{-\frac{p^k+1}{2}}.
\end{align}
Since the transformation $x$ to $y$ is one-to-one, each solution $x\in E_{11}$ to $D_f(x)=b$  corresponds to either a solution $y\in E_{00}$ to (\ref{Eq_0515_004}) for $b\in C_0$ or a solution $y\in E_{11}$ to (\ref{Eq_0515_004}) for $b\in C_1$.

Since $D_f(0)=1$ and $D_f(-1)=-1$, we have completed the proof. \hfill $\Box$   
\end{proof}

Using the above lemma, the differential spectrum of $f(x)$ can be derived as follows. 

\begin{theorem}
\label{Main_Thrm2}
For an odd prime $p$ such that $p\equiv 3\bmod 4$, odd $n$ with $k|n$, and $d=(p^n+1)/(p^k+1)+(p^n-1)/2$, the differential spectrum of $f(x)=x^d$ in $\mathbb{F}_{p^n}$ is given as
\begin{align*}
	\omega_i=
	\begin{cases}
		2,& ~{\rm if~}i=\frac{p^k+1}{4}~({\rm the~corresponding~two~}b'{\rm s~are~}\pm 1)\\
		\frac{p^n-p^k}{p^k-1},& ~{\rm if~}i=\frac{p^k+1}{2}\\
		\frac{p^n-1}{2}-\frac{p^n-p^k}{p^k-1},& ~{\rm if~}i=1\\
		\frac{p^n-3}{2},& ~{\rm if~}i=0\\
		0,&~{\rm otherwise}.
	\end{cases}
\end{align*}
\end{theorem}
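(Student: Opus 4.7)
The plan is to combine Lemma \ref{Lem_Map2} with the parametrization of $E_{00}$ and $E_{11}$ from Lemma \ref{Lem_Cyc} (with $\gamma=-1$, since $p^n\equiv 3\bmod 4$). For $y\in E_{00}$ parametrized by $t\in\mathcal{T}_1$, writing $y=v^2$, $y+1=u^2$ with $u=(\alpha^t+\alpha^{-t})/2$ and $v=(\alpha^t-\alpha^{-t})/2$, a direct expansion of $u^{p^k+1}\pm v^{p^k+1}$ (using $2^{p^k}=2$) yields $h_1(y)|_{E_{00}}=M(\alpha^{t(p^k+1)})$ and $h_2(y)|_{E_{00}}=M(\alpha^{t(p^k-1)})$, where $M(z)=(z+z^{-1})/2$. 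The corresponding computation on $E_{11}$ interchanges the roles of $u$ and $v$; because $(p^k+1)/2$ is even (as $p^k\equiv 3\bmod 4$), the factor $(-1)^{(p^k+1)/2}$ equals $1$, so $h_1(y)|_{E_{11}}=M(\alpha^{t(p^k+1)})$ and $h_2(y)|_{E_{11}}=-M(\alpha^{t(p^k-1)})$. This gives $\chi_1=\lambda_1$, $\chi_4=\lambda_4$, $\chi_2=\lambda_3$, $\chi_3=\lambda_2$, so with $L(b):=\lambda_1+\lambda_2+\lambda_3+\lambda_4$, Lemma \ref{Lem_Map2} reduces to $N(1,b)=L(b)$ for $b\ne\pm 1$ and $N(1,\pm 1)=L(\pm 1)+1$.

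Next I decompose $L(b)=(\lambda_2+\lambda_3)+(\lambda_1+\lambda_4)$. The first summand counts $t\in\mathcal{T}_1$ with $M(\alpha^{t(p^k-1)})\in\{c,-c\}$ for $c=b^{-(p^k+1)/2}$; invoking Lemma \ref{Lem_MapProperty_00} (case $n/e$ odd, with $e=k$) together with Lemma \ref{Relation_00_11}, which gives $\mathcal{I}_{00}\cap\mathcal{I}_{11}=\emptyset$ and $\mathcal{I}_{11}=-\mathcal{I}_{00}$, it equals $(p^k-1)/2$ when $c\in(\mathcal{I}_{00}\cup\mathcal{I}_{11})\setminus\{\pm 1\}$, $(p^k-3)/4$ when $c=\pm 1$, and $0$ otherwise. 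For the second summand, Lemma \ref{Lem_gcd} gives $\gcd(p^k+1,p^n-1)=2$ (since $k$ and $n/k$ are odd), so $\alpha^{t(p^k+1)}$ has order $(p^n-1)/2$; consequently $t\mapsto M(\alpha^{t(p^k+1)})$ is injective on $\mathcal{T}_1$ and never attains $\pm 1$ (because $-1=\alpha^{(p^n-1)/2}$ lies outside the subgroup $\langle\alpha^{p^k+1}\rangle=C_0$), and the same obstruction forbids $\{c,-c\}\subseteq J:=\{M(\alpha^{t(p^k+1)}):t\in\mathcal{T}_1\}$. Hence $\lambda_1+\lambda_4\in\{0,1\}$.

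The pivotal and hardest step is proving $(\mathcal{I}_{00}\cup\mathcal{I}_{11})\setminus\{\pm 1\}\subseteq J\cup(-J)$, which rules out the spurious value $L(b)=(p^k-1)/2$ and forces $L(b)\in\{0,1,(p^k+1)/2\}$ for $b\ne\pm 1$. Given $c=M(\alpha^{s(p^k-1)})\in\mathcal{I}_{00}\setminus\{1\}$, I will solve the linear congruence $t(p^k+1)\equiv s(p^k-1)\pmod{p^n-1}$; it is solvable since $\gcd(p^k+1,p^n-1)=2$ divides the even integer $s(p^k-1)$, and $c\ne 1$ forces any solution $t_0$ to be nonzero modulo $(p^n-1)/2$. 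Since $[1,(p^n-3)/4]$ and $[(p^n+1)/4,(p^n-3)/2]$ partition $[1,(p^n-3)/2]$, exactly one of $\{t_0,-t_0\}\bmod (p^n-1)/2$ lies in $\mathcal{T}_1$, giving $c\in J$; the symmetry $\mathcal{I}_{11}=-\mathcal{I}_{00}$ covers the remaining inclusion.

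With all this in place, the counting is routine. From $N(1,\pm 1)=(p^k-3)/4+1=(p^k+1)/4$ we read off $\omega_{(p^k+1)/4}=2$. The map $b\mapsto c=b^{-(p^k+1)/2}$ is $2$-to-$1$ from $\mathbb{F}_{p^n}^*\setminus\{\pm 1\}$ onto $C_0\setminus\{1\}$, so $\omega_{(p^k+1)/2}=2|C_0\cap(\mathcal{I}_{00}\cup\mathcal{I}_{11})\setminus\{1\}|$. The involution $x\mapsto -x$ swaps $C_0\leftrightarrow C_1$ and $\mathcal{I}_{00}\leftrightarrow \mathcal{I}_{11}$, yielding $|C_0\cap(\mathcal{I}_{00}\cup\mathcal{I}_{11})|=|\mathcal{I}_{00}|=(p^n+p^k-2)/(2(p^k-1))$, so $\omega_{(p^k+1)/2}=2(|\mathcal{I}_{00}|-1)=(p^n-p^k)/(p^k-1)$. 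The remaining $\omega_0$ and $\omega_1$ are then forced by the two identities $\sum_i\omega_i=p^n$ and $\sum_i i\omega_i=p^n$.
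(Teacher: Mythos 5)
Your proposal is correct and follows essentially the same route as the paper: the reduction via Lemma \ref{Lem_Map2}, the representations $h_1\mapsto M(\alpha^{t(p^k+1)})$ and $h_2\mapsto M(\alpha^{t(p^k-1)})$ on $E_{00}$ and $E_{11}$ with fiber sizes imported from Lemma \ref{Lem_MapProperty_00}, the injectivity of the $h_1$-type map with $\pm 1$ excluded, and your pivotal containment $(\mathcal{I}_{00}\cup\mathcal{I}_{11})\setminus\{\pm 1\}\subseteq J\cup(-J)$ together with $J\cap(-J)=\emptyset$, which is exactly the paper's $\mathcal{H}_{100}\supset\mathcal{H}_{200}\setminus\{1\}$, $\mathcal{H}_{400}\supset\mathcal{H}_{300}\setminus\{-1\}$, and $\mathcal{H}_{100}\cap\mathcal{H}_{400}=\emptyset$, proved by the same solvability-of-congruence argument. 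The only differences are bookkeeping: you dispatch $b\in C_1$ at the outset via the identification $\chi_1=\lambda_1$, $\chi_2=\lambda_3$, $\chi_3=\lambda_2$, $\chi_4=\lambda_4$ and recover $\omega_0,\omega_1$ from the moment identities $\sum_i\omega_i=\sum_i i\,\omega_i=p^n$, whereas the paper counts the squares in the $\mathcal{H}$-sets directly.
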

\begin{proof} From Lemma \ref{Lem_Map2}, in order to determine $N(1,b)$, we should calculate $\sum_{i=1}^4\lambda_i(b)$ and $\sum_{i=1}^4\chi_i(b)$ for $b\in C_0$ and $b\in C_1$, respectively. 
 From Lemma \ref{Lem_Cyc}, $h_1(x)$ and $h_2(x)$ on $E_{00}$ can be represented as
\begin{align}\label{Thm2_Eq1}
	h_1(x)|_{E_{00}}=\frac{\alpha^{t(p^k+1)}+\alpha^{-t(p^k+1)}}{2}\nonumber\\
	h_2(x)|_{E_{00}}=\frac{\alpha^{t(p^k-1)}+\alpha^{-t(p^k-1)}}{2}
\end{align}
where $x=({\alpha^t-\alpha^{-t}})^2/4$ and $t$ varies over $\mathcal{T}_1$. 
Similarly, $h_1(x)$ and $h_2(x)$ on $E_{11}$ can be represented as
\begin{align}\label{Thm2_Eq2}
	h_1(x)|_{E_{11}}=\frac{\alpha^{t(p^k+1)}+\alpha^{-t(p^k+1)}}{2}\nonumber\\
	h_2(x)|_{ E_{11}}=-\frac{\alpha^{t(p^k-1)}+\alpha^{-t(p^k-1)}}{2}
\end{align}
where $x=-(\alpha^t+\alpha^{-t})^2/4$ and $t$ varies over $\mathcal{T}_1$. Note that $h_1(x)=-h_4(x)$ and $h_2(x)=-h_3(x)$. 

Since $\gcd((p^k+1)/2,p^n-1)=2$, $b^{-\frac{p^k+1}{2}}$ in (\ref{Eq_Thm2_add1}) varies over $C_0$ twice, while $b$ varies over $\mathbb{F}_{p^n}^*$. Note that $b=\pm \lambda$ give the same $b^{-\frac{p^k+1}{2}}$ and one of $\pm \lambda$ is a square in $\mathbb{F}_{p^n}$ and the other is a nonsquare in $\mathbb{F}_{p^n}$. Hence, in order to determine $N(1,b)$ for $b\in \mathbb{F}_{p^n}^*$, we need to derive the mapping property of $h_i(x)=c$, $1\leq i\leq 4$, where $c$ is a square in $\mathbb{F}_{p^n}$, for $x\in E_{00}$ and $x\in E_{11}$, respectively. Then, using Lemma \ref{Lem_Map2}, the differential spectrum of $f(x)$ can be determined. 

Define the sets as
\begin{align*}
	\mathcal{H}_{ijk}=\{h_i(x)|x\in E_{jk}\}. 
\end{align*}

For $b\in C_{0}$, we should consider the mapping property of $h_i(x)=c$ on $E_{00}$, where $c$ is a square in $\mathbb{F}_{p^n}$. Assume that there exist $x_1,x_2\in E_{00}$ such that $h_1(x_1)=h_1(x_2)$ for $x_1\neq x_2$. Let $t_1=\theta^{-1}(x_1)$ and $t_2=\theta^{-1}(x_2)$. Then, from (\ref{Thm2_Eq1}), it is easy to derive that $(p^k+1)t_1\equiv \pm (p^k+1)t_2 \bmod p^n-1$. Since $(p^k+1,p^n-1)=2$, we have $t_1 \pm t_2 \equiv 0 \bmod (p^n-1)/2$, which cannot be satisfied because $1\leq t_1,t_2\leq (p^n-3)/4$. Hence we conclude that $h_1(x)|_{E_{00}}$ is injective on $E_{00}$, that is,  $|\mathcal{H}_{100}|=|E_{00}|=(p^n-3)/4$. 

Consider the mapping $h_2(x)|_{E_{00}}$, which has the same form as (\ref{Eq_Dfx_C00}). Therefore we can use the result when $p^n\equiv 3\bmod 4$ in Lemma \ref{Lem_MapProperty_00} and thus we have $|\mathcal{H}_{200}|=(p^n+p^k-2)/(2(p^k-1))$. The cardinality of the inverse image in $E_{00}$ of any element in $\mathcal{H}_{200}\setminus \{1\}$ is $(p^k-1)/2$ and  the cardinality of the inverse image in $E_{00}$ of $1\in \mathcal{H}_{200}$ is $(p^k-3)/4$.

Now, consider the relationship of the elements in $\mathcal{H}_{100}$ and $\mathcal{H}_{200}$. Assume that there exist $x_1,x_2\in E_{00}$ such that $h_1(x_1)=h_2(x_2)$. Let $t_1=\theta^{-1}(x_1)$ and $t_2=\theta^{-1}(x_2)$. Then, from (\ref{Thm2_Eq1}), we have $(p^k+1)t_1\equiv \pm (p^k-1)t_2 \bmod p^n-1$, which can be rewritten as
\begin{align}\label{Eq1_Thm2}
	\frac{p^k+1}{2}t_1\equiv \pm \frac{p^k-1}{2}t_2 \bmod \frac{p^n-1}{2}. 
\end{align}
Since $\gcd((p^k+1)/2,(p^n-1)/2)=1$, $(p^k+1)/2$ has an inverse modulo $(p^n-1)/2$. Hence for any $t_2\in \mathcal{T}_1$ which is not divisible by $(p^n-1)/(p^k-1)$, there exists $t_1\in \mathcal{T}_1$ satisfying (\ref{Eq1_Thm2}). Since $t_2$ which is divided by $(p^n-1)/(p^k-1)$ gives $h_2(x)=1$, we conclude that $1\in \mathcal{H}_{200}$ and $\mathcal{H}_{100}\supset (\mathcal{H}_{200}\setminus\{1\})$.

Since $h_4(x)=-h_1(x)$ and $h_3(x)=-h_2(x)$, $h_4(x)|_{E_{00}}$ has the same mapping property with $h_1(x)|_{E_{00}}$, and $h_3(x)|_{E_{00}}$ has the same mapping property with $h_2(x)|_{E_{00}}$. Furthermore, it is easy to check that $\mathcal{H}_{400}=-\mathcal{H}_{100}$, $\mathcal{H}_{300}=-\mathcal{H}_{200}$, and $\mathcal{H}_{400}\supset (\mathcal{H}_{300}\setminus\{-1\})$. 

It should also be checked that $\mathcal{H}_{100}$ cannot include both $y$ and $-y$. Assume that there exist $x_1,x_2\in E_{00}$ such that $h_1(x_1)=-h_1(x_2)$. From (\ref{Thm2_Eq1}), we have $(p^k+1)t_1\equiv \pm(p^k+1)t_2+(p^n-1)/2 \bmod p^n-1$, which can be rewritten as
\begin{align}\label{Eq2_Thm2}
	(p^k+1)(t_1\pm t_2)\equiv \frac{p^n-1}{2} \bmod p^n-1.
\end{align}
Since $\gcd(p^k+1,p^n-1)=2$ does not divide $(p^n-1)/2$, (\ref{Eq2_Thm2}) cannot be satisfied. Hence we conclude that there exist no $x_1,x_2\in E_{00}$ such that $h_1(x_1)=-h_1(x_2)$. Consequently, we conclude that $\mathcal{H}_{100}\cap \mathcal{H}_{400}=\emptyset$. 

So far, we have investigated the mapping property of $h_i(x)|_{E_{00}}$ and the relationship among the elements in $\mathcal{H}_{i00}$, $1\leq i\leq4$. 

Now, we will calculate that $N(1,b)=\lambda_1(b)+\lambda_2(b)+\lambda_3(b)+\lambda_4(b)$ for square $b\in 
\mathbb{F}_{p^n}^*$, which is the sum of the cardinalities of the inverse images in $E_{00}$ of the square element  in $\mathbb{F}_{p^n}$, $b^{-(p^k+1)/2}$ in (\ref{Eq_Thm2_add1}). Note that there are $(p^n-3)/4$ squares in $\mathcal{H}_{100}\cup\mathcal{H}_{400}$ because $\mathcal{H}_{100}\cap \mathcal{H}_{400}=\emptyset$ and $\mathcal{H}_{100}=-\mathcal{H}_{400}$. Since $\mathcal{H}_{100}\supset (\mathcal{H}_{200}\setminus \{1\})$, $\mathcal{H}_{400}\supset (\mathcal{H}_{300}\setminus \{-1\})$, and $\mathcal{H}_{200}=-\mathcal{H}_{300}$, there are $(p^n-p^k)/(2(p^k-1))$ squares in $(\mathcal{H}_{200}\setminus \{1\})\cup \mathcal{H}_{300}$, which are also included in $\mathcal{H}_{100}\cup \mathcal{H}_{400}$. We can regard each square in $\mathcal{H}_{100}\cup\mathcal{H}_{200}\cup\mathcal{H}_{300}\cup\mathcal{H}_{400}$ as $b^{-(p^k+1)/2}$ in (\ref{Eq_Thm2_add1}).
From Lemma \ref{Lem_Map2}, it is easy to check that for each square $c$ in $(\mathcal{H}_{200}\setminus \{1\})\cup \mathcal{H}_{300}$, $N(1,\delta)=(p^k+1)/2$ and for each square $c$ in $(\mathcal{H}_{100}\cup \mathcal{H}_{400})\setminus (\mathcal{H}_{200}\cup \mathcal{H}_{300}$), $N(1,\delta)=1$, where $\delta$ is a square in $\mathbb{F}_{p^n}$ such that $\delta^{-(p^k+1)/2}=c$. For $b=1$, from Lemma \ref{Lem_Map2}, $N(1,b)=(p^k-3)/4+1=(p^k+1)/4$. Let $n_i$ denote the number of $b\in \mathbb{F}_{p^n}$, which are squares in $\mathbb{F}_{p^n}$, such that  $N(1,b)=i$. Then, $n_{({p^k+1})/{2}}=(p^n-p^k)/(2(p^k-1))$, $n_{({p^k+1})/{4}}=1$, $n_{1}=(p^n-3)/4-(p^n-p^k)/(2(p^k-1))$, and $n_{0}=(p^n-1)/2-n_{({p^k+1})/{2}}-n_{1}$.

Consider the case when $b\in C_1$. From (\ref{Thm2_Eq1}) and (\ref{Thm2_Eq2}), note that $h_1(x)|_{E_{00}}=h_2|_{E_{11}}$, $h_4(x)|_{E_{00}}=h_4|_{E_{11}}$, $h_2(x)|_{E_{00}}=h_3|_{E_{11}}$, and $h_3(x)|_{E_{00}}=h_2|_{E_{11}}$. Since $t$ varies over $\mathcal{T}_1$ for both $x\in E_{00}$ and $x\in E_{11}$, they have the same mapping property, which means that for $b\in C_1$, the distribution of $N(1,b)$ is the same as the case when $b\in C_0$. Taking that $N(1,b)=1$ when $b=0$ into account, it is derived that  $\omega_{({p^k+1})/{2}}=2n_{({p^k+1})/{2}}=(p^n-p^k)/((p^k-1))$, $\omega_{({p^k+1})/{4}}=2n_{({p^k+1})/{4}}=2$,  $\omega_{1}=2n_1+1=(p^n-1)/2-(p^n-p^k)/((p^k-1))$, and $\omega_{0}=p^n-\omega_{({p^k+1})/{2}}-\omega_{({p^k+1})/{4}}-\omega_1$. 

\hfill $\Box$
\end{proof}

\begin{corollary}
\label{Mod_Helleseth}
For an odd prime $p$ such that $p\equiv 3\bmod 4$, odd $n$, $k|n$, and $d=(p^n+1)/(p^k+1)+(p^n-1)/2$, the differential uniformity of the function $f(x)=x^d$ in $\mathbb{F}_{p^n}$ is given as $\Delta_f=(p^k+1)/2$. \hfill $\Box$
\end{corollary}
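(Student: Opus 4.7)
The plan is to derive this corollary as an immediate consequence of Theorem \ref{Main_Thrm2}, which already provides the complete differential spectrum of $f$. By definition,
\[
\Delta_f \;=\; \max_{a\in\mathbb{F}_{p^n}^*,\, b\in\mathbb{F}_{p^n}} N(a,b) \;=\; \max\{\,i : \omega_i > 0\,\},
\]
so the claim reduces to identifying the largest index $i$ for which the spectrum coefficient $\omega_i$ appearing in Theorem \ref{Main_Thrm2} is strictly positive.

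Reading off that theorem, the only indices $i$ with $\omega_i \neq 0$ lie in the set $\{0,\, 1,\, (p^k+1)/4,\, (p^k+1)/2\}$. First I would verify that the largest candidate, $i=(p^k+1)/2$, actually occurs, i.e., that $\omega_{(p^k+1)/2} = (p^n - p^k)/(p^k-1)$ is strictly positive. This holds in the nontrivial regime $n>k$; the edge case $n=k$ would collapse the exponent $d$ to $(p^n+1)/2$, which is precisely the function treated by Theorem \ref{Thm_1}, so the setting of the present corollary implicitly assumes $n>k$.

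Given $n>k$, one then checks the trivial ordering $(p^k+1)/2 > (p^k+1)/4 > 1 > 0$ and concludes that the maximum index with strictly positive weight is $(p^k+1)/2$, so $\Delta_f = (p^k+1)/2$. There is essentially no obstacle here: all the substantive counting has been carried out in the proof of Theorem \ref{Main_Thrm2} via Lemma \ref{Lem_Map2}, and the corollary is a one-line reading of that spectrum. The only subtlety worth recording is the observation above that the exponent $d$ genuinely differs from $(p^n+1)/2$ precisely when $n > k$, which is needed both for $\omega_{(p^k+1)/2}$ to be nonzero and for the function not to duplicate the one analyzed in Section \ref{Dif_Fun1}.
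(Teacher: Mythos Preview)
Your proposal is correct and matches the paper's approach: the paper states this corollary with no proof at all (just a $\Box$), treating it as an immediate reading of the differential spectrum in Theorem~\ref{Main_Thrm2}. Your observation that one needs $n>k$ for $\omega_{(p^k+1)/2}=(p^n-p^k)/(p^k-1)$ to be strictly positive is a worthwhile clarification that the paper leaves implicit.
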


From the results, new power functions which are differential $4$-uniform and $6$-uniform are  introduced as in the following corollaries. 

  \begin{corollary}
  \label{Cor_NewFunction}
  Let $d=({p^n+1})/{8}+({p^n-1})/{2}$. Then $x^d$ defined on $\mathbb{F}_{p^n}$ is differential $4$-uniform for $p=7$ and odd $n$.  \hfill $\Box$
  \end{corollary}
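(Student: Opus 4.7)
The plan is to obtain this corollary as an immediate specialization of the preceding Corollary \ref{Mod_Helleseth}, which states that for an odd prime $p$ with $p \equiv 3 \bmod 4$, odd $n$, and $k \mid n$, the power function $x^{\frac{p^n+1}{p^k+1}+\frac{p^n-1}{2}}$ on $\mathbb{F}_{p^n}$ has differential uniformity exactly $(p^k+1)/2$.

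First I would check that all hypotheses of Corollary \ref{Mod_Helleseth} are met with the choice $p = 7$ and $k = 1$. Since $7 \equiv 3 \bmod 4$, the congruence condition on $p$ holds. The integer $k = 1$ trivially divides any odd $n$, so the divisibility condition $k \mid n$ is satisfied. With these parameters, the exponent in the statement reads $(p^n+1)/(p^k+1) + (p^n-1)/2 = (7^n+1)/8 + (7^n-1)/2$, which matches the $d$ in Corollary \ref{Cor_NewFunction} exactly.

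Next I would simply evaluate the differential uniformity expression from Corollary \ref{Mod_Helleseth} at these parameters: $\Delta_f = (p^k+1)/2 = (7+1)/2 = 4$. Hence $x^d$ is differential $4$-uniform on $\mathbb{F}_{7^n}$ for every odd $n$, completing the proof.

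Since this is purely an instantiation, there is no genuine obstacle; the only thing worth double-checking is that $(7^n+1)/8$ is indeed an integer for odd $n$, which is clear because $7 \equiv -1 \bmod 8$ forces $7^n \equiv -1 \bmod 8$ when $n$ is odd, so $8 \mid 7^n + 1$. Thus the exponent $d$ is well defined and the corollary follows directly.
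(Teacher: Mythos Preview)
Your proposal is correct and matches the paper's own treatment: the paper states this corollary with no proof (just a $\Box$), since it is an immediate specialization of Corollary~\ref{Mod_Helleseth} at $p=7$, $k=1$. Your verification of the hypotheses and the integrality check for $(7^n+1)/8$ are entirely in line with what the paper leaves implicit.
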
  
  
   \begin{corollary}
  \label{Cor_NewFunction2}
  Let $d=({p^n+1})/{12}+({p^n-1})/{2}$. Then $x^d$ defined on $\mathbb{F}_{p^n}$ is differential $6$-uniform for $p=11$ and odd $n$.  \hfill $\Box$
  \end{corollary}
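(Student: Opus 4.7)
The plan is to recognize that Corollary \ref{Cor_NewFunction2} is a direct specialization of the preceding Corollary \ref{Mod_Helleseth} (the one giving $\Delta_f=(p^k+1)/2$ for the exponent $d=(p^n+1)/(p^k+1)+(p^n-1)/2$). So I would only need to select the parameters $(p,k)$ for which $(p^k+1)/2=6$ and check that all the hypotheses of that corollary are met.

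Concretely, I would first solve $(p^k+1)/2=6$, giving $p^k=11$. Since $11$ is prime, this forces $p=11$ and $k=1$. Substituting into the given exponent yields
\[
d=\frac{11^n+1}{11+1}+\frac{11^n-1}{2}=\frac{p^n+1}{p^k+1}+\frac{p^n-1}{2},
\]
which matches the exponent treated in Theorem \ref{Main_Thrm2}/Corollary \ref{Mod_Helleseth}.

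Next I would verify the three standing hypotheses of that corollary: (i) $p=11$ is an odd prime with $11\equiv 3\bmod 4$; (ii) $n$ is odd, which is assumed; (iii) $k\mid n$, which is automatic since $k=1$. With all hypotheses in force, Corollary \ref{Mod_Helleseth} immediately gives $\Delta_f=(p^k+1)/2=6$, so $x^d$ is differentially $6$-uniform on $\mathbb{F}_{11^n}$.

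There is essentially no obstacle here; the entire content of the corollary is the numerical observation that $(11+1)/2=6$. The real work has already been done in Theorem \ref{Main_Thrm2} and its corollary. The only point requiring a moment of care is checking that the chosen $k=1$ is consistent with the divisibility condition $k\mid n$ and with the oddness of $n$ imposed by the theorem; both are trivially satisfied.
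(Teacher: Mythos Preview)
Your proposal is correct and matches the paper's approach: the corollary is stated without proof (just a $\Box$), indicating it is an immediate specialization of Corollary~\ref{Mod_Helleseth} with $p=11$ and $k=1$, exactly as you argue.
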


\section{Conclusion}
\label{Concl}

In this paper, the differential spectrum of the two power functions $x^{\frac{p^k+1}{2}}$ and $x^{\frac{p^n+1}{p^k+1}+\frac{p^n-1}{2}}$ in $\mathbb{F}_{p^n}$ are derived. The result can be used to determine the differential uniformity $\Delta_f$ of the two power functions. Two new power functions in $\mathbb{F}_{p^n}$ which are differential $4$-uniform and $6$-uniform  are also found. 

\section*{Appendix}
\begin{aproof}
{
~~~\\

\noindent {\bf Case 1)} Relationship between $\mathcal{I}_{00}$ and $\mathcal{I}_{10}$; 
\vspace{0.2cm}

Assume that there exist $x_1\in E_{00}$ and $x_2\in E_{10}$ such that $D_f(x_1)=D_f(x_2)$. Let $t_1=\theta^{-1}(x_1)$ and $t_2=\theta^{-1}(x_2)$. From Lemma \ref{Eq_C00}, we have
\begin{align}\label{Eq_Rela2_1}
	\alpha^{(p^k-1)t_1}+\alpha^{-(p^k-1)t_1}=\delta^{2(p^k-1)t_2}+\delta^{-2(p^k-1)t_2}.
\end{align}
Since $\alpha=\beta^{p^n+1}$ and $\delta=\beta^{(p^n-1)/2}$, (\ref{Eq_Rela2_1}) is satisfied if and only if 
\begin{align}\label{0521_A}
	[(p^n+1)t_1\pm (p^n-1)t_2](p^k-1)\equiv 0 \bmod (p^{2n}-1).
\end{align}
Then (\ref{0521_A}) can be rewritten as
\begin{align}\label{0521_B}
	(p^n+1)t_1\pm (p^n-1)t_2\equiv 0 \bmod \frac{p^{2n}-1}{p^g-1}
\end{align}
where $\gcd(p^k-1,p^{2n}-1)=p^g-1$.

Note that
\begin{align}\label{Eq2_ConB}
	\gcd(\frac{p^{2n}-1}{p^g-1},p^n+1)=
	\begin{cases}
		p^n+1,&~{\rm if~}g=e\\
		\frac{p^n+1}{p^e+1},&~{\rm if~}g=2e.
	\end{cases}
\end{align}

Consider the case when $g=e$, i.e., $k/e$ is odd. For the solvability of (\ref{0521_A}), $\pm (p^n-1)t_2$ should be divided by $p^n+1$. Since $\gcd(p^n+1,p^n-1)=2$, $t_2$ should be divided by $(p^n+1)/2$. Since $t_2$ varies over $[1,(p^n-3)/4]$ for $p^n\equiv 3\bmod 4$ and $[1,(p^n-1)/4]$ for $p^n\equiv 1\bmod 4$, $t_2$ cannot be divided by $(p^n+1)/2$.  Hence we conclude that $\mathcal{I}_{00}\cap \mathcal{I}_{10}=\emptyset$ for odd $k/e$. 

Next, consider the case when $g=2e$, i.e., $k/e$ is even. From (\ref{0521_B}), $(p^n+1)t_1$ should be divided by $\gcd(p^n-1,(p^{2n}-1)/(p^{2e}-1))=(p^n-1)/(p^e-1)$. Since $\gcd((p^n-1)/(p^e-1),p^n+1)=1$, $t_1$ should be divided by $(p^n-1)/(p^e-1)$, which means that $\mathcal{I}_{00}\cap \mathcal{I}_{10}=\{1\}$ for even $k/e$. 
  
\vspace{0.1cm}  
\noindent {\bf Case 2)} Relationship between $\mathcal{I}_{11}$ and $\mathcal{I}_{10}$;
\vspace{0.2cm}

For the case when $p^n\equiv 3\bmod 4$ and $p^k\equiv 1\bmod 4$, we already proved that $\mathcal{I}_{00}=\mathcal{I}_{11}$. Since $g=2e$, i.e., $k/e$ is even, in the case, we conclude that $\mathcal{I}_{11}\cap \mathcal{I}_{10}=\{1\}$ for even $k/e$.

Consider the case when $p^n\equiv 3\bmod 4$ and $p^k\equiv 3\bmod 4$. Note that $g=e$, i.e., $k/e$ is odd in the case. We can prove this case similar to Case 1). Assume that there exist $x_1\in E_{11}$ and $x_2\in E_{10}$ such that $D_f(x_1)=D_f(x_2)$. Let $t_1=\theta^{-1}(x_1)$ and $t_2=\theta^{-1}(x_2)$. From Lemma \ref{Eq_C00}, we have
\begin{align}\label{Eq_Rela2_222}
	(p^k-1)[(p^n+1)t_1\pm (p^n-1)t_2] \equiv \frac{p^{2n}-1}{2} \bmod (p^{2n}-1).
\end{align}
Then (\ref{Eq_Rela2_222}) can be rewritten as
\begin{align}\label{Eq1_ConC}
	(p^n+1)t_1\pm (p^n-1)t_2 \equiv \frac{p^{2n}-1}{2(p^e-1)} \bmod \frac{p^{2n}-1}{p^e-1}
\end{align}
where $\gcd(p^k-1,(p^{2n}-1)/2)=p^e-1$. For the solvability of (\ref{Eq1_ConC}), $\pm (p^n-1)t_2$ should be divided by $(p^n+1)/2$. Since $\gcd((p^n+1)/2,p^n-1)=2$, $t_2$ should be divided by $(p^n+1)/4$. Since $t_2$ varies over $[1,(p^n-3)/4]$ for $p^n\equiv 3\bmod 4$, $t_2$ cannot be divided by $(p^n+1)/4$.  Hence we conclude that $\mathcal{I}_{00}\cap \mathcal{I}_{10}=\emptyset$ for odd $k/e$.

Next, consider the case when $p^n\equiv 1\bmod 4$ and $p^k\equiv 1\bmod 4$. Assume that there exist $x_1\in E_{11}$ and $x_2\in E_{10}$ such that $D_f(x_1)=D_f(x_2)$. From Lemma \ref{Eq_C00} and by setting $\gamma=-\alpha$,  we have
\begin{align}\label{Eq_Con3_3}
	(p^k-1)[(p^n+1)t_1\pm (p^n-1)t_2] \equiv -\frac{p^k-1}{2}(p^n+1) \bmod (p^{2n}-1).
\end{align}
Note that $g$ can be equal to either $e$ or $2e$ in this case. For the case when $g=e$, i.e., $k/e$ is odd, (\ref{Eq_Con3_3}) can be rewritten as
\begin{align}\label{Eq_Con3_4}
	\frac{p^k-1}{p^e-1}[(p^n+1)t_1\pm (p^n-1)t_2] \equiv -\frac{p^n+1}{2}\cdot\frac{p^k-1}{p^e-1} \bmod \frac{p^{2n}-1}{p^e-1}.
\end{align}
From (\ref{Eq_Con3_4}), $(p^n-1)t_2$ should be divided by $(p^n+1)/2$. Since $\gcd(p^n-1,(p^n+1)/2)=1$, $t_2$ should be divided by $(p^n+1)/2$. Note that $t_2$ varies over $[1,(p^n-1)/4]$. We conclude that $\mathcal{I}_{11}\cap \mathcal{I}_{10}=\emptyset$ for odd $k/e$. 

For the case when $g=2e$, i.e., $k/e$ is even, (\ref{Eq_Con3_3}) can be rewritten as
\begin{align}\label{Eq_RelC_1}
	\frac{p^k-1}{p^g-1}[(p^n+1)t_1\pm (p^n-1)t_2] \equiv -\frac{p^n+1}{2}\cdot\frac{p^k-1}{p^g-1} \bmod \frac{p^{2n}-1}{p^g-1}.
\end{align} 
From $\gcd((p^{2n}-1)/(p^g-1),(p^n+1)/2)=(p^n+1)/(p^e+1)$ and (\ref{Eq_RelC_1}), $(p^n-1)t_2$ should be divided by $(p^n+1)/(p^e+1)$. Since $\gcd(p^n-1,(p^n+1)/(p^e+1))=1$, $t_2$ should be divided by $(p^n+1)/(p^e+1)$. From Lemma \ref{Lem_gcd}, we have $p^e+1|p^k-1$. Hence $t_2$ which is divided by $(p^n+1)/(p^e+1)$ gives $D_f(x)=1$, which means that $\mathcal{I}_{11}\cap \mathcal{I}_{10}=\{1\}$ for even $k/e$. 

The case when $p^n\equiv 1\bmod 4$ and $p^k\equiv 3\bmod 4$ can be proved similarly.

\vspace{0.1cm}
\noindent {\bf Case 3)} Relationship between $\mathcal{I}_{00}$ and $\mathcal{I}_{01}$;
\vspace{0.2cm}

We already proved that $\mathcal{I}_{10}\cap \mathcal{I}_{01}=\emptyset$ for $p^e\equiv 3\bmod  4$ and odd $k/e$ and  $\mathcal{I}_{10}= \mathcal{I}_{01}$, otherwise. Hence we only need to consider the case when $p^e\equiv 3\bmod  4$ and odd $k/e$ in Case 3) and Case 4). Note that $p^k\equiv 3\bmod 4$ in this case. 

First, consider the relationship between $\mathcal{I}_{00}$ and $\mathcal{I}_{01}$. Assume that there exist $x_1\in E_{00}$ and $x_2\in E_{01}$ such that $D_f(x_1)=D_f(x_2)$. Let $t_1=\theta^{-1}(x_1)$ and $t_2=\theta^{-1}(x_2)$. Again, we have 
\begin{align}\label{Rel4_Eq_1}
	(p^k-1)[(p^n+1)t_1\pm (\frac{p^n-1}{2}+(p^n-1)t_2)]\equiv 0\bmod (p^{2n}-1),
\end{align}
which can be rewritten as
\begin{align}\label{Rel4_Eq_2}
	\frac{p^k-1}{p^e-1}[(p^n+1)t_1\pm (\frac{p^n-1}{2}+(p^n-1)t_2)]\equiv 0 \bmod \frac{p^{2n}-1}{p^e-1}.
\end{align}
For the solvability of (\ref{Rel4_Eq_2}), $(p^n-1)/2\pm(p^n-1)t_2$ should be divided by $p^n+1$, which is given as
\begin{align}\label{Eq_0417}
	\frac{p^n-1}{2}(1\pm 2t_2)\equiv 0 \bmod (p^n+1).
\end{align}
For $p^n\equiv 3\bmod 4$, the left-hand side is odd, while the right-hand side is even, which is a contradiction. For $p^n\equiv 1\bmod 4$, since $\gcd((p^n-1)/2,p^n+1)=2$, $1\pm 2t_2$ should be divided by $(p^n+1)/2$. Assume that $1+2t_2=(p^n+1)/2$. Then $t_2$ should be $(p^n-1)/4$. However, since $t_2$ varies over $[0,(p^n-5)/4]$, it is impossible. Therefore, we conclude that $\mathcal{I}_{00}\cap \mathcal{I}_{01}=\emptyset$. 

\vspace{0.1cm}
\noindent {\bf Case 4)} Relationship between $\mathcal{I}_{11}$, and $\mathcal{I}_{01}$;
\vspace{0.2cm}

Next, consider the relationship between $\mathcal{I}_{11}$ and $\mathcal{I}_{01}$. Assume that there exist $x_1\in E_{11}$ and $x_2\in E_{01}$ such that $D_f(x_1)=D_f(x_2)$. For the case when $p^n\equiv 3\bmod 4$, by setting $\gamma=-1$, we have
\begin{align}\label{Rel4_eq_1}
	(p^k-1)[(p^n+1)t_1\pm (\frac{p^n-1}{2}+(p^n-1)t_2)]\equiv \frac{p^{2n}-1}{2}\bmod (p^{2n}-1),
\end{align}
which can be rewritten as
\begin{align}\label{Rel4_eq_2}
	\frac{p^k-1}{p^e-1}[(p^n+1)t_1\pm (\frac{p^n-1}{2}+(p^n-1)t_2)]
\equiv \frac{p^n+1}{2}\cdot\frac{p^n-1}{p^e-1} \bmod \frac{p^{2n}-1}{p^e-1}.
\end{align}
For the solvability of  (\ref{Rel4_eq_2}), $(p^n-1)\big(1\pm 2(p^n-1)t_2\big)/2$ should be divided by $(p^n+1)/2$. Since $\gcd((p^n-1)/2,(p^n+1)/2)=1$, $1\pm 2(p^n-1)t_2$ should be divided by $(p^n+1)/2$. Since $(p^n+1)/2$ is even and $1\pm 2(p^n-1)t_2$ is odd, it is a contradiction. Hence  we conclude that $\mathcal{I}_{00}\cap \mathcal{I}_{01}=\emptyset$. 

For the case when $p^n\equiv 1\bmod 4$, $(p^n-1)/2(1\pm 2t_2)$ should be divided by $(p^n+1)/2$. Hence $1\pm 2t_2$ should be divided by $(p^n+1)/2$. Assume that $1+2t_2$ is divided by $(p^n+1)/2$. Then $t_2$ should be equal to $(p^n-1)/4$, which is a contradiction because $t_2\leq (p^n-5)/4$. Therefore, $\mathcal{I}_{00}\cap \mathcal{I}_{01}=\emptyset$.

From Case 1)--Case 4), the proof can be done. \hfill$\Box$
}
\end{aproof}

%% References with bibTeX database:

\bibliographystyle{elsarticle-num}
\bibliography{<your-bib-database>}

\begin{thebibliography}{}

\bibitem{Dickson} L. E. Dickson, {\em Linear Groups With An Exposition of The Galois Field Theory.}  New York, NY: Dover Publications, 1958. 

\bibitem{Nyberg} K. Nyberg, ``Differentially uniform mappings for cryptography," in {\em Advances in Cryptography--EUROCRYPTO'93, Lecture Notes in Computer Science}, vol. 765. New York: Springer--Verlag, 1994, pp. 55--64. 

\bibitem{Blondeau2} C. Blondeau, A. Canteaut, and P. Charpin, ``Differential properties of power functions," {\em Int. J. Inf. Coding Theory,} vol. 1, no. 2, pp. 149--170, 2010.

\bibitem{Blondeau3} C. Blondeau, A. Canteaut, and P. Charpin, ``Differential properties of power functions," in {\em Proc. IEEE Int. Symp, Information Theory (ISIT 2010)}, Austin, Texas, Jun. 2010, pp. 2478-2482.

\bibitem{Blondeau} C. Blondeau, A. Canteaut, and P. Charpin,
``Differential properties of $x\mapsto x^{2^t-1}$," {\em IEEE Trans. Inf. Theory}, vol. 57,
no. 12, pp. 8127--8137, Dec. 2011.


\bibitem{Helleseth} T. Helleseth, C. Rong, and D. Sandberg,
``New families of almost perfect nonlinear power mapping," {\em IEEE Trans. Inf. Theory}, vol. 45,
no. 2, pp. 475--485, Mar. 1999.

\bibitem{Dobbertin} H. Dobbertin, D. Mills, E. N. Muller, and A. P. Willems, ``APN functions in odd characteristic," {\em Discr. Math}, vol. 267, pp. 95--112, 2003.


\bibitem{Helleseth2} T. Helleseth and D. Sandberg, ``Some power mappings with low differential uniformity," {\em Applicable Algebra in Eng., Comm. and Computing}, vol. 8, pp. 363--370, 1997. 

\bibitem{Zha} Z. Zha and X. Wang, ``Power functions with low uniformity on odd characteristic finite fields," {\em Science China Mathematics}, vol. 53, no. 8, pp. 1931--1940, 2010. 

\bibitem{Zha2} Z. Zha and X. Wang,
``Almost perfect nonlinear power functions in odd characteristic," {\em IEEE Trans. Inf. Theory}, vol. 57,
no. 7, pp. 4826--4832, Jul. 2011.


\bibitem{Ness} G. J. Ness and T. Helleseth, ``A new family of ternary almost perfect nonlinear mappings," {\em IEEE Trans. Inf. Theory}, vol. 53,
no. 7, pp. 2581--2586, Jul. 2007.

\bibitem{Dobbertin} H. Dobbertin, ``Almost perfect nonlinear power functions on $GF(2^n)$: The Welch case," {\em IEEE Trans. Inf. Theory}, vol. 45,
no. 4, pp. 1271--1275, Apr. 1999.

\bibitem{Berger} T. P. Berger, A. Canteaut, P. Charpin, and Y. L.-Chapuy, ``On almost perfect nonlinear functions over ${F}_{2^n}$," {\em IEEE Trans. Inf. Theory}, vol. 52,
no. 9, pp. 4160--4170, Sep. 2006.

\bibitem{Edel} Y. Edel, G. Kyureghyan, and A. Pott, ``A new APN functions which is not equivalent to a power mapping," {\em IEEE Trans. Inf. Theory}, vol. 52,
no. 2, pp. 744--747, Feb. 2006.

\bibitem{Bracken} C. Bracken, E. Byrne, N. Markin, and G. McGuire, ``A few more quadratic APN functions," {Cryptogr. Commun.}, vol. 3,
no. 1, pp. 43--53, Mar. 2011.

\bibitem{Bracken2} C. Bracken and G. Leander, ``A highly nonlinear differentially 4 uniform power mapping that permutes fields of even degree," {\em Finite Fields and Their Applic.}, vol. 16, no. 4, pp. 231--242, Jul. 2010. 


\bibitem{Storer} T. Storer, {\em Cyclotomy and Difference Sets, Lectures in Advanced Mathematics.} Chicago, IL: Markham, 1967.

\bibitem{Lidl} R.~Lidl and H.~Niederreiter, {\em Finite Fields},
vol. 20 of {\em Encyclopedia of Mathematics and Its Applications}.
Reading, MA: Addison-Wesley, 1983.

\end{thebibliography}

%% Authors are advised to submit their bibtex database files. They are
%% requested to list a bibtex style file in the manuscript if they do
%% not want to use elsarticle-num.bst.

%% References without bibTeX database:

% \begin{thebibliography}{00}

%% \bibitem must have the following form:
%%   \bibitem{key}...
%%

% \bibitem{}

% \end{thebibliography}

\end{document}